\definecolor{rAdjNorm}{RGB}{232,116,196}
\definecolor{APDA}{RGB}{28,116,180}
\definecolor{PC}{RGB}{212,36,44}
\definecolor{PopGo}{RGB}{156,104,192}
\definecolor{Reg}{RGB}{144,84,76}
\definecolor{LightGCN}{RGB}{52,164,44}
\definecolor{HetroFair}{RGB}{255,124,12}
\newtheorem{corollary}{Corollary}
\newtheorem{proposition}{Proposition}
\begin{document}
	\title{Heterophily-Aware Fair Recommendation using Graph Convolutional Networks}
	\author{Nemat Gholinejad, Mostafa Haghir Chehreghani\\
	Department of Computer Engineering\\
	 Amirkabir University of Technology (Tehran Polytechnic)\\
	 Tehran, Iran\\
	\texttt{\{n.gholinezhad,mostafa.chehreghani\}@aut.ac.ir}}
	\date{}
	
	\maketitle

		\begin{abstract}
	In recent years, graph neural networks (GNNs) have become a popular tool to improve the accuracy and performance of recommender systems. Modern recommender systems are not only designed to serve end users, but also to benefit other participants, such as items and item providers. These participants may have different or conflicting goals and interests, which raises the need for fairness and popularity bias considerations. GNN-based recommendation methods also face the challenges of unfairness and popularity bias, and their normalization and aggregation processes suffer from these challenges.
	In this paper, we propose a fair GNN-based recommender system, called HetroFair, to improve item-side fairness. HetroFair uses two separate components to generate fairness-aware embeddings:
	i) Fairness-aware attention, which incorporates the dot product in the normalization process of GNNs to decrease the effect of nodes' degrees.
	ii)	Heterophily feature weighting, to assign distinct weights to different features during the aggregation process.
	To evaluate the effectiveness of HetroFair, we conduct extensive experiments over six real-world datasets. Our experimental results reveal that HetroFair not only alleviates unfairness and popularity bias on the item side but also achieves superior accuracy on the user side. Our implementation is publicly available at \url{https://github.com/NematGH/HetroFair}.			
\end{abstract}

	\vspace{\baselineskip}
	\noindent \textbf{Keywords} Recommendation systems, graph neural networks (GNNs), fairness, heterophily.
	
	\section{Introduction}
\label{sec:introduction}

Graph-based recommendation has become more popular in recent years due to the existence of a bipartite graph in the nature of recommender systems~\cite{gcmc, lightgcn, powerfulgraph, ngcf, dgcf, xie2021improving}. In this bipartite graph, nodes are divided into two sets: users and items, with edges representing ratings or interactions between them. Graph neural networks (GNNs) are a powerful method for deep learning in graph-structured data, as they can capture the complex and non-Euclidean relationships between the nodes and edges of a graph~\cite{zhou2020graph, gnnsurvey, halfdecade,DBLP:journals/tjs/ZohrabiSC24,DBLP:journals/corr/abs-2401-01384,10.1145/3700790}.

With the widespread use of graph neural networks in recommender systems, the competition to build systems with higher accuracy has entered a new phase~\cite{lightgcn, ecir2023item, ultragcn, yu2022low}. Early methods attempted to improve the accuracy of recommender systems to better satisfy end users. However, users are not the only stakeholders, and accuracy is not the sole objective. Recent research demonstrates that recommender systems are multi-stakeholder systems in which, in addition to end users, other participants (especially items and item providers) should also be considered during the recommendation process~\cite{survey_multistakeholder, multistakeholder, cp_ecir, cpfair, poi_unfairness}.

On the other hand, the power-law characteristic of the recommendation graphs shows that a large number of items possess a low degree, while only a few of them have a large degree. Figure \ref{fig:powerlaw} illustrates the degree distribution of two real-world recommendation datasets. This feature causes a phenomenon known in the literature as {\em popularity bias}, where popular items are recommended more frequently than other items. As a result, items with few interactions (long-tail items) are ignored, even if they are of high quality and relevance~\cite{abdollahpouri2019unfairness, naghiaei2022unfairness}.
Similar investigations have been done for GNN-based recommendation, revealing that long-tail items are often sacrificed during message normalization and aggregation of GNNs~\cite{adjnorm, apda}.

Various methods have been proposed in recent years for long-tail recommendation, which aim to recommend items that belong to the long tail of the distribution of ratings or popularity while simultaneously trying to achieve high accuracy on the user side. Regularization-based approaches attempt to mitigate popularity bias by changing the loss function that penalizes the entire network~\cite{popularitymetrics}. However, this often degrades overall performance, as it uniformly restricts the model without considering nuanced variations in item popularity. Another line of research focuses on multi-graph learning, where multiple graphs are optimized for different objectives in an integrated manner~\cite{hashtag,m2grl}. Additionally, causal learning has been leveraged to address popularity bias by constructing causal graphs to analyze the sources of bias and adjust the user-item relevance scores accordingly~\cite{wei2021model}.

Recent studies have shown that message passing in GNNs is a key factor in exacerbating bias against nodes with lower degrees~\cite{chen2024graph,bilateral}. Some works address this issue by modifying the normalization and aggregation processes in GNN-based recommender systems~\cite{adjnorm,apda}. However, these approaches overlook two critical aspects:
1)
The weights assigned to messages for normalizing the effect of popularity bias are static and non-trainable, remaining fixed across the entire propagation layers. This limits the model’s ability to adaptively adjust for different levels of item popularity.
2)
All features (attributes) of a node are treated equally, with a single fixed weight applied to all, despite the fact that different features capture distinct aspects of an item. For example, in movie-related item nodes, some features may represent the number of star actors, others the film producer, and others the genre, each contributing differently to the recommendation process.

In this paper, we propose the HetroFair model to improve item-side fairness in GNN-based recommendation systems. To do so, we first directly address the cause of popularity bias and reduce the effect of nodes' degrees on enlarging the values of their embeddings by involving dot products during the normalization process. Second, with respect to the heterophily property in graphs, we distinguish between different aspects of an item by considering a specific weight for each feature during the aggregation process. Our key contributions are summarized as follows:
\begin{itemize}
	\item
	We theoretically analyze the effect of message normalization in GNN-based recommendation systems on popularity bias.
	\item
	To generate fair embeddings for users and items, we design a fairness-aware attention mechanism by adding normalization terms derived from the dot product to embeddings.
	\item 
	We introduce feature-specific trainable weights to embeddings of users and items, integrated into the message-passing layers of GNNs, to compute heterophily-aware similarity between items and users.
	\item
	We conduct extensive experiments over six well-known datasets and show that our proposed model outperforms state-of-the-art methods in terms of both accuracy and fairness metrics. Furthermore, by performing ablation studies, we demonstrate that each component of our proposed method significantly contributes to the overall performance.
\end{itemize}

The rest of this paper is organized as follows.
In Section \ref{sec:relatedwork}, we provide a brief overview of related methods. In Section \ref{sec:preliminaries}, we introduce necessary preliminaries and definitions.
In Section \ref{sec:ourmethod}, we describe our proposed method in details.
In Section \ref{sec:experiments}, we report the results of our
extensive experiments conducted to show the high performance of our proposed method.
Finally, the paper is concluded in Section \ref{sec:conclusion}.

\begin{figure}[t]
	\begin{subfigure}[b]{0.48\linewidth}
		\includegraphics[width=\linewidth]{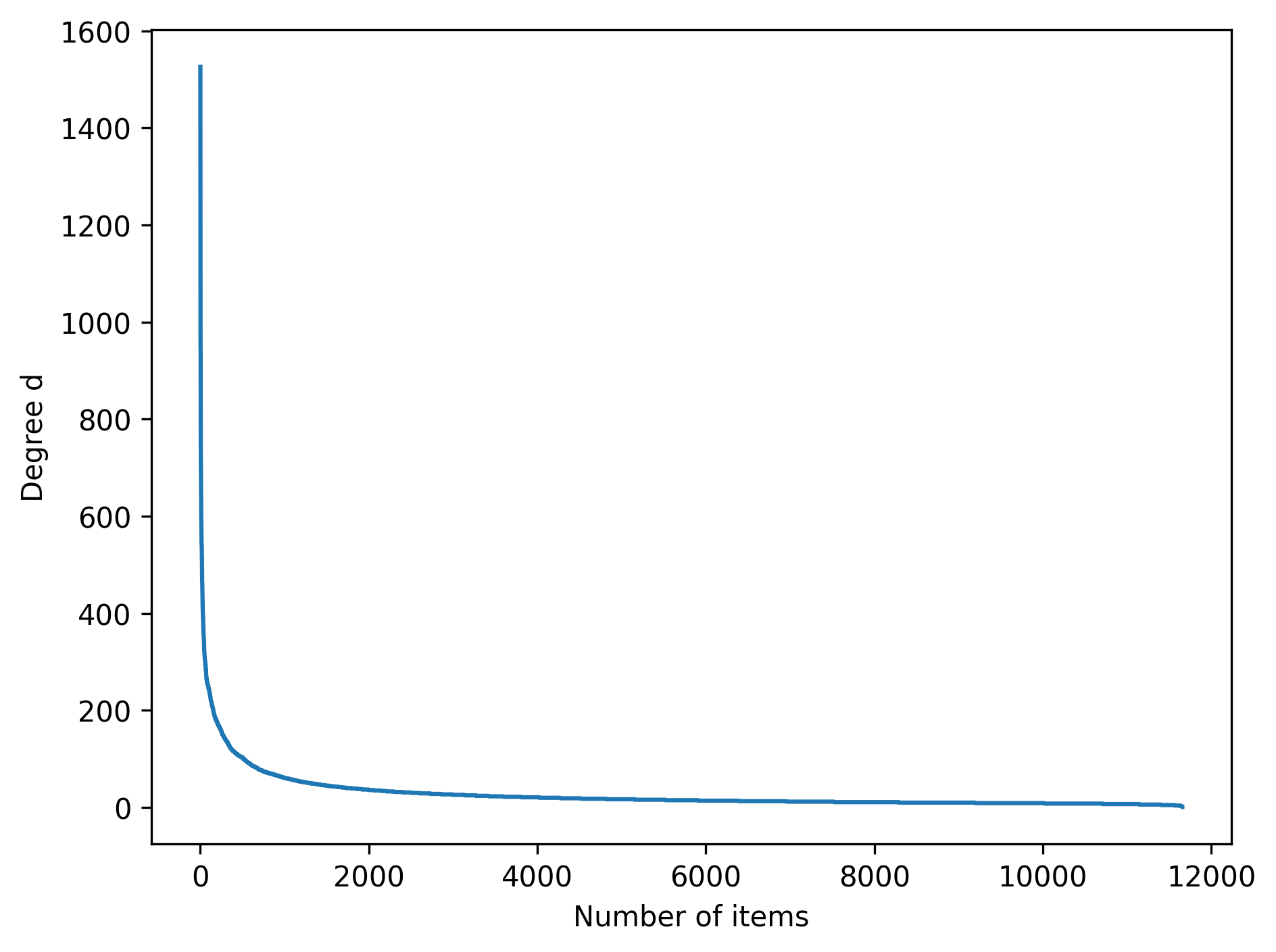}
		\caption{Epinions}
	\end{subfigure}
	\begin{subfigure}[b]{0.48\linewidth}
		\includegraphics[width=\linewidth]{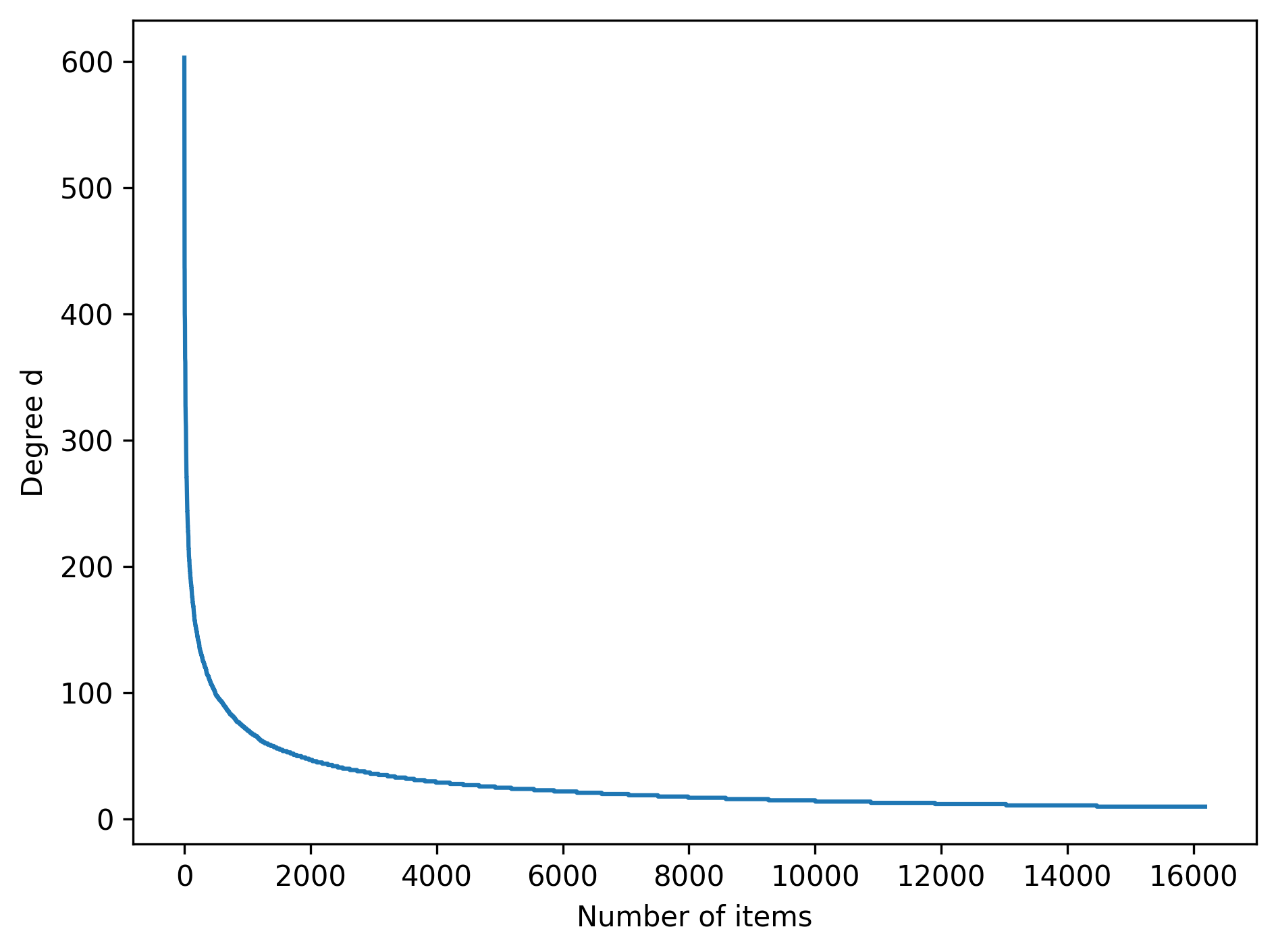}
		\caption{CDs}
	\end{subfigure}
	\caption{Degree distribution of items, in two recommendation datasets.}\label{fig:powerlaw}
\end{figure}

\section{Related work}
\label{sec:relatedwork}

In this section, we review recent advances in two fields that are closely related to our work:
graph-based recommendation and fairness-aware recommendation.

\subsection{Graph-based recommendation}

After the distinct capability of GNNs in node classification
within graph-structured data~\cite{gcn,graphsage,gat},
the GC-MC method~\cite{gcmc} extends this model to link prediction. It utilizes a graph auto-encoder to fill the user-item interaction matrix. PinSage~\cite{pinsage} introduces a recommender system for large-scale graphs,
leveraging the inductivity property of graph neural networks,
a concept proposed by GraphSage~\cite{graphsage}.
NGCF~\cite{ngcf} models higher-order connectivities by combining embeddings from different layers after embedding propagation on the user-item bipartite graph. LR-GCCF~\cite{lrgccf} removes non-linearity and demonstrates that this removal enhances recommendation performance. Taking it a step further, LightGCN~\cite{lightgcn} removes non-linearity and weight transformation, introducing a lightweight version of graph convolution for embedding propagation in user-item bipartite graphs. UltraGCN~\cite{ultragcn},
bypasses explicit message passing. Instead, it aims to compute the limit of the infinite graph convolution layer by leveraging the constraint loss.
IGCCF~\cite{ecir2023item} introduces an inductive convolution-based algorithm for recommender systems without training individual user embeddings, by leveraging an item-item graph.
Despite the high performance of these graph-based recommendation systems in improving accuracy, they often neglect to investigate unfairness
resulting from popularity bias,
and fail to use techniques that simultaneously consider items' side.

\subsection{Fairness-aware recommendation}

Research in the domain of fairness in graph-based recommendation is divided into two categories: diverse recommendation and popularity debiasing.

\paragraph{Diverse recommendation}
Diverse recommendation means that the items in individual recommendation lists should be dissimilar and cover a broader range of categories. BGCF~\cite{bgcf} utilizes  the node-copy mechanism and connects users to higher-order item neighbors that belong to different categories than their direct neighbors, yet remain similar.
DGCN~\cite{dgcn} mitigates the influence of dominant categories while elevating the significance of disadvantaged categories within neighboring nodes, achieved through re-balanced neighborhood sampling. Additionally, it leverages adversarial learning to achieve category-free representation for items.
Isufi et al.~\cite{isufiipm} suggest jointly learning from nearest and furthest neighbor graphs to achieve accuracy-diversity trade-off for node representations. DGRec~\cite{dgrec} applies a sub-modular function for the neighbor selection process to recommend a diverse subset of items.
It also manipulates the loss function, increasing the loss value for long-tail items and, conversely, decreasing it for short-head items.

\paragraph{Popularity debias}
The primary objective of popularity debias is to unearth qualified but less interacted items, long-tail ones, which are items that have lower training samples than short-head items. V2HT~\cite{hashtag} constructs a heterogeneous item-item graph with four different types of edges to connect long-tails to popular items. SGL~\cite{sgl} brings self-supervised learning to the recommendation setting and creates an auxiliary task for the long-tail recommendation. Zhao et al.~\cite{zhao2022popularity} disentangle popularity bias into two benign and harmful categories and argue that item quality can also affect popularity bias. They propose a time-aware disentangled framework to identify harmful bias.
r-AdjNorm~\cite{adjnorm} changes the power of the symmetric square root normalization term
in graph neural networks to control the normalization process during neighborhood aggregation, aiming for improved results, especially for low-degree items. APDA~\cite{apda} assigns lower weights to connected edges during the aggregation process and employs residual connections to achieve unbiased and fair representations for users and items in graph collaborative filtering. PopGo~\cite{zhang2024robust} proposes a debiasing strategy for collaborative filtering models by estimating interaction-wise popularity shortcuts and adjusting predictions accordingly. It trains a separate shortcut model to learn popularity-related representations and assigns a shortcut degree to each user-item pair.
Chen et al.~\cite{chen2024graph} theoretically analyze the effect of graph convolution on popularity bias amplification and show that stacked multiple graph convolution layers accelerate users' movement towards popular items in the representation space.
Our proposed method takes a different approach to improve 
items' side fairness and on the one hand,
considers different weights for features during	the aggregation process.
On the other hand, it 
directly involves dot products in the normalization process.

\section{Preliminaries}
\label{sec:preliminaries}

In this section, we introduce preliminaries widely used  in the paper,
including task formulation, graph neural networks, the Light Graph Convolution model and the heterophily computation method in user-item bipartite graphs.

\subsection{Problem setup}

In graph-based recommendation, we have a set of users \(U = \{u_1,u_2,\ldots,u_{|U|}\}\) and a set of items \(I=\{i_1,i_2,\ldots,i_{|I|}\}\) that constitute the nodes \(V\) of a bipartite graph \(G=\left(V,E\right)\),
where \(V = U \cup I \) and \(E\) represents implicit interactions (such as purchases, views and clicks) between users and items. The interactions  matrix \(R \in \mathbb{R}^{|U|\times|I|}\) is defined such that if \(e_{u,i} \in E\) then \(R_{u,i}=1;\) otherwise, it is 0. In graph \(G\), there exist no edges connecting users or items to each other. Thus, the graph adjacency matrix \(A\) is denoted as:
\[
A = \setlength{\arraycolsep}{3pt} % Set the margin between columns
\renewcommand{\arraystretch}{1.5} % Set the vertical spacing
\begin{bmatrix}
	0_{|U|\times|U|} & \hspace{1pt}\vrule\hspace{1pt} & R \\
	\hline
	R^{T} & \hspace{1pt}\vrule\hspace{1pt} & 0_{|I|\times|I|} \\
\end{bmatrix}.
\]
This matrix is used during message propagation by a graph neural network.

\subsection{Graph neural networks}

Graph neural networks (GNNs) are a class of learning algorithms that work well
on graph data structures.
The primary task of a GNN is to map each node in the graph to a vector in a low-dimensional vector space.
This vector is called the embedding or representation of the node.
GNNs rely on message passing, wherein each node in the graph sends a message to its direct neighbors.
Messages, in this context,
are intermediate vector representations of nodes.
Graph convolutional network (GCN)~\cite{gcn} is one of the first and most popular graph neural networks
that uses the following message passing rule:
\begin{equation}
	H^{\left(k+1\right)} = \sigma\left(\tilde{D}^{-\frac{1}{2}}\tilde{A}\tilde{D}^{-\frac{1}{2}}H^{\left(k\right)}M^{\left(k\right)}\right)\label{eq:gcn}.
\end{equation}
Here, \(D\)  is the diagonal degree matrix that has the same dimension as
the adjacency matrix \(A\) of the graph, \(D_{ii} = \sum_{j} A_{ij}\), \(\tilde{D}=D+I\), \(\tilde{A}=A+I\),  \(I \in R^{\left(|U|+|I|\right)\times\left(|U|+|I|\right)}\) is the identity matrix, \(\sigma\left(\cdot\right)\) is an activation function,
\(H^{\left(k\right)}\) is the matrix of node embeddings (representations) in the layer \(k\) and \(M^{\left(k\right)}\) denotes a layer-specific trainable weight matrix.
To compute the embedding of each node in the graph,
a graph neural network constructs a graph, which is a rooted tree and is usually called
node's {\em computational graph}~\cite{DBLP:conf/iclr/XuHLJ19}.

\subsection{Light Graph Convolutional}

The authors of LightGCN~\cite{lightgcn} demonstrate that non-linearity and feature transformation in the standard GCN are unnecessary for collaborative filtering graphs. They introduce a light version of GCN, as follows:
\begin{align}
	h_{u}^{\left(k+1\right)} =\sum_{i \in N(u)} \frac{1}{\sqrt{d_{u}}\sqrt{d_{i}}} h_{i}^{\left(k\right)},\label{eq:eq1}\\
	h_{i}^{\left(k+1\right)} =\sum_{i \in N(i)} \frac{1}{\sqrt{d_{u}}\sqrt{d_{i}}} h_{u}^{\left(k\right)}\label{eq:eq2}\cdot
\end{align}
Here, \(h_{u}^{\left(k\right)}\) and \(h_{i}^{\left(k\right)}\) denote the
embeddings of user \(u\) and item \(i\) at layer \(k\).
By \(N\left(x\right)\) we denote the set of nodes which have an edge to node \(x\),
and by \(d_x\) we denote the degree of node \(x\). The term \(\frac{1}{\sqrt{d_{u}}\sqrt{d_{i}}}\) is called the symmetric square root normalization term.
In the matrix form,
\(H^{\left(0\right)} = [h_{1}^{\left(0\right)}, h_{2}^{\left(0\right)}, \ldots, h_{|V|}^{\left(0\right)} ]\)
denotes the initial embeddings layer.
The embeddings in the next layers are formulated as follows:
\begin{equation}
	H^{\left(k+1\right)} = (D^{-\frac{1}{2}}AD^{-\frac{1}{2}})H^{\left(k\right)}.
	\label{eq:eq3}
\end{equation}

We define the size of an embedding $em$ as $\sqrt{em_1^2+ em_2^2+ \cdots + em_n^2}$, where $n$ is the dimension of $em$
and $em_i$ refers to its $i$-{th} element.
This is also known as the norm-2 of $em$. We say embedding $em$ is greater than embedding $em'$, denoted by $em > em'$, if the size of $em$ is greater than the size of $em'$.

\subsection{Homophily computation in bipartite user-item graphs}

To quantify homophily in the user-item bipartite graph, we use the following approach, which captures the alignment between users’ inferred preferences and the categories of the items they engage with.
Each item \(i \in I\) has an assigned label \(\ell(i) \in \mathrm{Label}\) denoting its category. Since users lack explicit labels, we infer a dominant preference label \(\ell(u)\) for each user \(u \in U\) by examining the labels of the items they have interacted with. Specifically, we define:
\[
\ell(u) = \arg\max_{\ell \in \mathrm{Label}}
\bigl|\{\,i \in N(u) \mid \ell(i) = \ell\,\}\bigr|,
\]
where \(N(u) = \{\,i \in I \mid (u, i) \in E\,\}\) is the set of items user \(u\) has interacted with, and \(\ell(i)\) denotes the label of item \(i\).
We then define a compatibility function \(\mathrm{compat}(u, i)\) that indicates whether an interaction \((u, i)\) is homophilic:
\[
\mathrm{compat}(u, i) =
\begin{cases}
	1, & \text{if }\ell(u) = \ell(i),\\
	0, & \text{otherwise}.
\end{cases}
\]

Using this compatibility function, we compute the homophily score \(H\) as the average compatibility across all valid edges (i.e., edges where both user and item labels are defined):
\[
H = \frac{1}{\lvert E' \rvert} \sum_{(u,i)\in E'} \mathrm{compat}(u,i),
\]
where \(E' \subseteq E\) contains only those interactions for which both \(\ell(u)\) and \(\ell(i)\) are defined.
The homophily score \(H\) offers insight into the structure of the interaction graph:
\begin{itemize}
	\item \(H \approx 1\): Strong homophily—users predominantly interact with items that match their inferred preferences.
	\item \(H \approx 0\): Strong heterophily—users interact with items outside their dominant label.
\end{itemize}
This metric enables us to distinguish between homophilic and heterophilic datasets and to evaluate model performance accordingly.

\section{Our proposed method}
\label{sec:ourmethod}

In this section, we first analyze the effect of symmetric square root normalization on popularity bias in GNN-based recommendation systems.
We then present our approach to overcome this bias. Subsequently, we outline the process of training our recommendation model.
Lastly, we provide the pseudocode of our proposed method and analyze its time complexity.
%	The main components of our designed model is presented in \ref{fig:proposed_method}

\subsection{Symmetric square root normalization effect}

The following theorem analyzes the over-smoothing problem in graph neural networks.
Our research investigates the implications of this foundational result in the context of fairness in graph-based recommendation systems.

\newtheorem{theorem}{Theorem}
\begin{theorem}
	\label{thm:oversmoothing}
	\cite{bilateral} For every undirected connected graph \(G=\left(V,E\right)\) where each node has a self loop, the following holds:
	\begin{equation}
		\small
		\lim_{{k \to \infty}} \left(\tilde{D}^{-1/2}\tilde{A}\tilde{D}^{-1/2}\right)_{i,j}^{k}=\frac{\sqrt{\left(d_{i}+1\right)\left(d_{j}+1\right)}}{2|E|+|V|}.
		\label{eq:eq5}
	\end{equation}
\end{theorem}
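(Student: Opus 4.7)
The plan is to exploit the fact that $P := \tilde D^{-1/2}\tilde A\tilde D^{-1/2}$ is real and symmetric, so it admits an orthonormal eigendecomposition $P = \sum_{l} \lambda_l v_l v_l^{T}$. If I can show that the top eigenvalue $\lambda_1 = 1$ is simple with eigenvector proportional to $\tilde D^{1/2}\mathbf{1}$, and that every other eigenvalue satisfies $|\lambda_l| < 1$, then $P^k \to v_1 v_1^{T}$ entrywise, and reading off the $(i,j)$ entry will give exactly the claimed limit.

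First I would identify the candidate top eigenvector. A direct calculation shows
\[
P\bigl(\tilde D^{1/2}\mathbf{1}\bigr) \;=\; \tilde D^{-1/2}\tilde A\tilde D^{-1/2}\tilde D^{1/2}\mathbf{1} \;=\; \tilde D^{-1/2}\tilde A\mathbf{1} \;=\; \tilde D^{-1/2}\tilde D\mathbf{1} \;=\; \tilde D^{1/2}\mathbf{1},
\]
so $1$ is an eigenvalue with eigenvector $\tilde D^{1/2}\mathbf{1}$, whose squared norm is $\sum_i (d_i+1) = 2|E|+|V|$. Normalizing gives the unit eigenvector $v_1$ with $(v_1)_i = \sqrt{d_i+1}/\sqrt{2|E|+|V|}$. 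I would then invoke similarity with the row-stochastic matrix $T = \tilde D^{-1}\tilde A$ (via $P = \tilde D^{1/2}T\tilde D^{-1/2}$, hence $P$ and $T$ share eigenvalues) to conclude $\|P\|_2 \le 1$, so all eigenvalues lie in $[-1,1]$.

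The two nontrivial steps, and the main obstacle, are proving that $1$ has geometric multiplicity one and that $-1$ is not an eigenvalue. For the first, I would use connectedness: if $Pv = v$ then $T\bigl(\tilde D^{-1/2}v\bigr) = \tilde D^{-1/2}v$, meaning $\tilde D^{-1/2}v$ is harmonic for the random walk on the connected graph $G$ with self-loops, forcing $\tilde D^{-1/2}v$ to be constant and $v \propto \tilde D^{1/2}\mathbf{1}$. For the second, I would use the self-loops: because every diagonal entry of $\tilde A$ is $1$, $T$ is a lazy (aperiodic) Markov chain on a connected state space, so the Perron–Frobenius / Markov chain convergence theorem rules out the eigenvalue $-1$; equivalently, the quadratic form argument shows $v^T \tilde A v = -v^T \tilde D v$ forces $v = 0$ once the self-loop terms are taken into account.

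Having established that $\lambda_1 = 1$ is a simple eigenvalue and all other $\lambda_l$ satisfy $|\lambda_l| < 1$, I would finish by writing
\[
P^k \;=\; v_1 v_1^{T} + \sum_{l \ge 2} \lambda_l^{k}\, v_l v_l^{T},
\]
and observing that the residual sum vanishes in operator (and hence entrywise) norm as $k \to \infty$. Evaluating the limit at entry $(i,j)$ gives
\[
\lim_{k \to \infty} (P^k)_{i,j} \;=\; (v_1)_i (v_1)_j \;=\; \frac{\sqrt{(d_i+1)(d_j+1)}}{2|E|+|V|},
\]
which is precisely the claimed identity. The only delicate point is the aperiodicity argument excluding $\lambda = -1$; everything else is straightforward spectral bookkeeping.
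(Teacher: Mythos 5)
Your proof is correct. Note that the paper itself offers no proof of this statement --- it is imported verbatim from the cited reference \cite{bilateral} --- so there is no in-paper argument to compare against; your spectral decomposition of $P=\tilde D^{-1/2}\tilde A\tilde D^{-1/2}$, with the unit Perron eigenvector $(v_1)_i=\sqrt{d_i+1}/\sqrt{2|E|+|V|}$ and the collapse $P^k\to v_1v_1^T$, is exactly the standard route to this result. The one point you flag as delicate, excluding $\lambda=-1$, closes cleanly: if $\tilde Aw=-\tilde Dw$ then $w^T(\tilde A+\tilde D)w=0$, but $\tilde A+\tilde D=(A+D)+2I$ is the signless Laplacian plus $2I$, hence positive definite, forcing $w=0$; so the argument is complete as sketched.
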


With respect to Light Graph Convolution in Equation \ref{eq:eq3},
after $k$ times message propagation,
the last layer node embeddings can be written as:
\begin{equation}
	H^{\left(k\right)} =\left(\tilde{D}^{-1/2}\tilde{A}\tilde{D}^{-1/2}\right)^{k}H^{\left(0\right)}.
	\label{eq:eq6}
\end{equation}

Combining Equation \ref{eq:eq6} with Equation \ref{eq:eq5} yields the following for user 
\(u\):
%	Equation \ref{eq:eq6} can be written as follows:
\begin{equation}
	h_u^{\left(k\right)} = \frac{\sqrt{d_{u}+1}}{2|E|+|V|} \left[ \left(\sqrt{d_{1}+1}\right)h_{1}^{\left(0\right)} + \left(\sqrt{d_{2}+1} \right) \right. \\
	\left. h_{2}^{\left(0\right)} +\ldots + \left(\sqrt{d_{|V|}+1}\right)h_{|V|}^{\left(0\right)}\right]\label{eq:eq7}.
\end{equation}

In Proposition \ref{prop:proposition1}, we exploit this equation to derive a relationship between the embedding sizes of users (and items) and their degrees. Then, in Proposition \ref{prop:proposition2}, we derive a relationship between the dot product similarity of the embeddings of items and users and their degrees. Later, in Corollary \ref{corollary1}, we combine these two to extract a relationship between the dot product similarities of the embeddings of items and users with their embedding sizes. In recommendation systems, the dot product is a commonly used operator to identify items that are relevant to users.

% this relation is used to investigate which items become more similar to a user $u$ (and hence are recommended to him), when the dot product is used as the similarity metric during recommendation.
\begin{proposition}
	\label{prop:proposition1}
	The embedding \(h_u^{(k)}\) of user \(u\) (respectively, the embedding \(h_i^{(k)}\) of item \(i\))
	is greater than the embedding \(h_{u'}^{(k)}\) of user \( u' \) (respectively, the embedding \(h_{i'}^{(k)}\) of item \(i'\)),
	if and only if the degree \(d_u\) of user \(u\) (respectively, the degree \(d_i\) of item \(i\))
	is greater than the degree \(d_{u'}\) of user \(u'\) (respectively, the degree \(d_i'\) of item \(i'\)).
\end{proposition}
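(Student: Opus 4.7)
The plan is to read off the result directly from Equation \ref{eq:eq7}. Observe that the bracketed sum
\[
C := \sum_{v \in V} \sqrt{d_v + 1}\, h_v^{(0)}
\]
does not depend on the user $u$; only the scalar prefactor $\sqrt{d_u+1}/(2|E|+|V|)$ does. So the limiting embedding of every user (and symmetrically every item) is a positive scalar multiple of the same fixed vector $C \in \mathbb{R}^n$.

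First I would rewrite Equation \ref{eq:eq7} compactly as $h_u^{(k)} = \alpha_u\, C$, where $\alpha_u = \sqrt{d_u+1}/(2|E|+|V|) > 0$. Then, taking the norm-2 of both sides, I obtain $\|h_u^{(k)}\| = \alpha_u \|C\|$. Since $\|C\|$ is a nonnegative constant shared across all nodes, for any two users $u, u'$:
\[
\|h_u^{(k)}\| > \|h_{u'}^{(k)}\|
\iff
\alpha_u > \alpha_{u'}
\iff
\sqrt{d_u + 1} > \sqrt{d_{u'} + 1}
\iff
d_u > d_{u'},
\]
where the last equivalence uses strict monotonicity of $\sqrt{\cdot}$ on nonnegative reals. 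The identical argument applies to items $i, i'$, since the derivation of Equation \ref{eq:eq7} from Theorem \ref{thm:oversmoothing} and Equation \ref{eq:eq6} treats user and item rows of $(\tilde{D}^{-1/2}\tilde{A}\tilde{D}^{-1/2})^k$ symmetrically.

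The main obstacle is really a bookkeeping matter rather than a mathematical one: I should note that the statement is implicitly in the limit $k \to \infty$ (or for sufficiently large $k$), since Equation \ref{eq:eq7} is obtained by combining the finite-$k$ expression in Equation \ref{eq:eq6} with the limiting identity of Theorem \ref{thm:oversmoothing}. I would also briefly flag the degenerate case $\|C\| = 0$ (where all limiting embeddings collapse to zero and the strict inequality cannot occur), and assume $\|C\| > 0$, which holds generically for the random initial embeddings used in practice. Modulo this caveat, the proof is a one-line consequence of the factorization $h_u^{(k)} = \alpha_u C$.
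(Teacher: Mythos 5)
Your proof is correct and follows essentially the same route as the paper's: factor Equation \ref{eq:eq7} as a node-dependent scalar $\sqrt{d_u+1}$ times a vector constant across all nodes, then invoke monotonicity of the square root. Your added caveats --- that the statement really concerns the $k \to \infty$ limit and that the degenerate case $\|C\| = 0$ must be excluded for the strict inequality to be attainable --- are refinements the paper's proof omits, but they do not change the argument.
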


\begin{proof}
	From Equation \ref{eq:eq7}, the embedding \(h_u^{(k)}\) for user \(u\) is given by:  
	\[
	h_u^{(k)} = \sqrt{d_u + 1} \times J,
	\]  
	where \(J = \frac{1}{2|E|+|V|} \sum_{v \in V} \sqrt{d_v + 1} \cdot h_v^{(0)}\) is a constant across all users. Similarly, the embedding for user \(u'\) is:  
	\[
	h_{u'}^{(k)} = \sqrt{d_{u'} + 1} \times J.
	\]
	
	Since \(J\) is constant and \(\sqrt{d_u + 1}\) is a monotonically increasing function of \(d_u\),  
	\[
	\sqrt{d_u + 1} > \sqrt{d_{u'} + 1} \quad \text{if and only if} \quad d_u > d_{u'}.
	\]
	
	Thus, \(h_u^{(k)} > h_{u'}^{(k)}\) if and only if \(d_u > d_{u'}\).  
	The same argument applies to item embeddings \(h_i^{(k)}\) and \(h_{i'}^{(k)}\), which are proportional to \(\sqrt{d_i + 1}\) and \(\sqrt{d_{i'} + 1}\), respectively.
	
\end{proof}

\begin{proposition}
	\label{prop:proposition2}
	The similarity score between user $u$ and item $i$,
	computed as the dot product \( h_u^{(k)} \cdot h_i^{(k)} \), is greater than the similarity score between user $u$ and item $i'$,
	represented by \( h_u^{(k)} \cdot h_{i'}^{(k)} \), if and only if the degree \( d_i \) of item \( i \) is greater than the degree \( d_{i'} \) of item \( i' \).  
\end{proposition}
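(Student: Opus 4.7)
The plan is to reuse the factorization already established in the proof of Proposition~\ref{prop:proposition1}. From Equation~\ref{eq:eq7}, every node embedding in the limit can be written as a scalar multiple of the common vector
\[
J \;=\; \frac{1}{2|E|+|V|} \sum_{v \in V} \sqrt{d_v+1}\; h_v^{(0)},
\]
with the scalar being $\sqrt{d_x+1}$ for node $x$. So the first step is to record this factorization for the three nodes involved: $h_u^{(k)}=\sqrt{d_u+1}\,J$, $h_i^{(k)}=\sqrt{d_i+1}\,J$, and $h_{i'}^{(k)}=\sqrt{d_{i'}+1}\,J$.

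Next I would plug these into the dot products appearing in the statement. Since scalars pull out of the inner product, I obtain
\[
h_u^{(k)} \cdot h_i^{(k)} \;=\; \sqrt{d_u+1}\,\sqrt{d_i+1}\,\|J\|^2,
\qquad
h_u^{(k)} \cdot h_{i'}^{(k)} \;=\; \sqrt{d_u+1}\,\sqrt{d_{i'}+1}\,\|J\|^2 .
\]
The common factor $\sqrt{d_u+1}\,\|J\|^2$ is strictly positive (the degree is nonnegative, and $\|J\|^2>0$ under the mild assumption that the initial embeddings are not identically zero, which is standard for GNN-based recommenders). Dividing both sides of the inequality by this positive constant reduces the question to comparing $\sqrt{d_i+1}$ with $\sqrt{d_{i'}+1}$, and since $x\mapsto\sqrt{x+1}$ is strictly monotone on nonnegative integers, this is equivalent to comparing $d_i$ with $d_{i'}$. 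Both directions of the ``if and only if'' follow from this monotonicity, giving the claim.

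The argument is essentially a one-line consequence of Proposition~\ref{prop:proposition1}'s factorization, so there is no real obstacle; the only subtlety worth flagging explicitly is the nondegeneracy assumption $\|J\|\neq 0$, which should be stated so that the cancellation of the common positive factor is rigorous. I would also briefly note that the hypothesis of Theorem~\ref{thm:oversmoothing} (undirected, connected, with self-loops) is inherited here, since the whole derivation is built on Equation~\ref{eq:eq7}.
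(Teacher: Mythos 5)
Your proof is correct and follows essentially the same route as the paper's: both factor each embedding as $\sqrt{d_x+1}$ times the common vector from Equation~\ref{eq:eq7}, cancel the shared positive factor $\sqrt{d_u+1}\,\lVert J\rVert^2$, and invoke monotonicity of the square root. The only difference is that you explicitly flag the nondegeneracy condition $\lVert J\rVert\neq 0$ needed for the cancellation to preserve the ``if and only if,'' a point the paper's proof leaves implicit.
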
  

\begin{proof}  
	Inspired by Equations \ref{eq:eq7}, the similarity scores \( y_{ui} \) and \( y_{ui'} \) are given by:
	
	\[
	y_{ui} = h_u^{(k)} \cdot h_i^{(k)} = \frac{\sqrt{d_u + 1} \cdot \sqrt{d_i + 1}}{\left(2|E| + |V|\right)^2} \left( \sum_{v \in V} \sqrt{d_v + 1} \cdot h_v^{(0)} \right)^2,  
	\]
	
	\[
	y_{ui'} = h_u^{(k)} \cdot h_{i'}^{(k)} = \frac{\sqrt{d_u + 1} \cdot \sqrt{d_{i'} + 1}}{\left(2|E| + |V|\right)^2} \left( \sum_{v \in V} \sqrt{d_v + 1} \cdot h_v^{(0)} \right)^2.
	\]  
	
	Since the term  
	\[
	\left( \sum_{v \in V} \sqrt{d_v + 1} \cdot h_v^{(0)} \right)^2
	\]  
	is common in both expressions and \( \sqrt{d_u + 1} \) is identical, the comparison of \( y_{ui} \) and \( y_{ui'} \) depends only on the relationship between \( \sqrt{d_i + 1} \) and \( \sqrt{d_{i'} + 1} \). Therefore:  
	
	\[
	y_{ui} > y_{ui'} \quad \iff \quad \sqrt{d_i + 1} > \sqrt{d_{i'} + 1}.
	\]  
	
	Since the square root function is monotonically increasing, we have:  
	\[
	\sqrt{d_i + 1} > \sqrt{d_{i'} + 1} \quad \iff \quad d_i > d_{i'}.
	\]  
	
	Thus,  
	$y_{ui} > y_{ui'}$ if and only if $d_i > d_{i'}$.   
\end{proof}

\begin{corollary} 		
	\label{corollary1}
	The similarity score between user $u$ and item $i$, represented by the dot product \( h_u^{(k)} \cdot h_i^{(k)} \), is greater than the similarity score between user $u$ and item $i'$, $ h_u^{(k)} \cdot h_i'^{(k)}$, if and only if the embedding of item $i$ , $h_i^{(k)}$, is larger than the embedding of item $i'$, $h_i'^{(k)}$.
\end{corollary}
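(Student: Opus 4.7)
The plan is to obtain Corollary~\ref{corollary1} as an immediate composition of Proposition~\ref{prop:proposition1} and Proposition~\ref{prop:proposition2}, using the item degree $d_i$ as the pivot quantity that links the two equivalences. No new algebraic manipulation of Equation~\ref{eq:eq7} is required: all the necessary content has already been extracted in the two propositions, and what remains is to chain their biconditionals.

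Concretely, I would proceed in two steps. First, I would invoke Proposition~\ref{prop:proposition2} to translate the hypothesis on dot products, namely $h_u^{(k)} \cdot h_i^{(k)} > h_u^{(k)} \cdot h_{i'}^{(k)}$, into the item-degree inequality $d_i > d_{i'}$, noting that the proposition states this as an ``if and only if''. Second, I would apply the item-side version of Proposition~\ref{prop:proposition1} to rewrite $d_i > d_{i'}$ as the embedding-size inequality $h_i^{(k)} > h_{i'}^{(k)}$, again using the ``if and only if'' direction. Composing these two biconditionals yields the desired equivalence between similarity scores and embedding sizes.

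The only technical point to verify is that the $>$ relation used in the corollary on embeddings coincides with the norm-based ordering defined in Section~\ref{sec:preliminaries} and used in Proposition~\ref{prop:proposition1}, and that the common factor $\sqrt{d_u+1}$ appearing in both sides of Proposition~\ref{prop:proposition2} is positive, so that cancelling it preserves the direction of the inequality. Both checks are immediate from the definitions.

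Since the corollary is obtained purely by transitively composing two previously proved biconditionals, I do not anticipate a real obstacle; the main task in the write-up is rhetorical rather than technical. The conceptual takeaway worth emphasizing is that, under LightGCN propagation, three quantities associated with an item,\ its degree, its embedding size, and its dot-product similarity to any fixed user,\ become monotonically equivalent, which is precisely the mechanism by which symmetric square root normalization induces popularity bias and motivates the fairness-aware modifications proposed later in the paper.
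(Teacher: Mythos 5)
Your proposal matches the paper's own proof: both obtain the corollary by chaining the biconditional of Proposition~\ref{prop:proposition2} (dot-product order $\iff$ degree order) with that of Proposition~\ref{prop:proposition1} (degree order $\iff$ embedding-size order), using the item degree as the pivot. The extra checks you mention (positivity of the common factor and the norm-based ordering) are implicit in the paper but harmless to state.
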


\begin{proof}
	From Proposition~\ref{prop:proposition1}, the embedding size \(h_i^{(k)}\) of an item \(i\) increases with its degree \(d_i\).
	Proposition~\ref{prop:proposition2} states that the similarity score \( h_u^{(k)} \cdot h_i^{(k)}\) is greater than \(h_u^{(k)} \cdot h_{i'}^{(k)}\) if and only if the degree \( d_i\) of item \(i\) is greater than the degree \(d_{i'}\) of item \(i'\).
	Since higher degrees correspond to larger embeddings, it follows that the similarity score \(h_u^{(k)} \cdot h_i^{(k)}\) is greater than \(h_u^{(k)} \cdot h_{i'}^{(k)}\) if and only if the embedding \(h_i^{(k)}\) is larger than \(h_{i'}^{(k)}\).
	
\end{proof}

\begin{figure}[!h]
	\begin{subfigure}[b]{0.33\linewidth}
		\includegraphics[width=\linewidth]{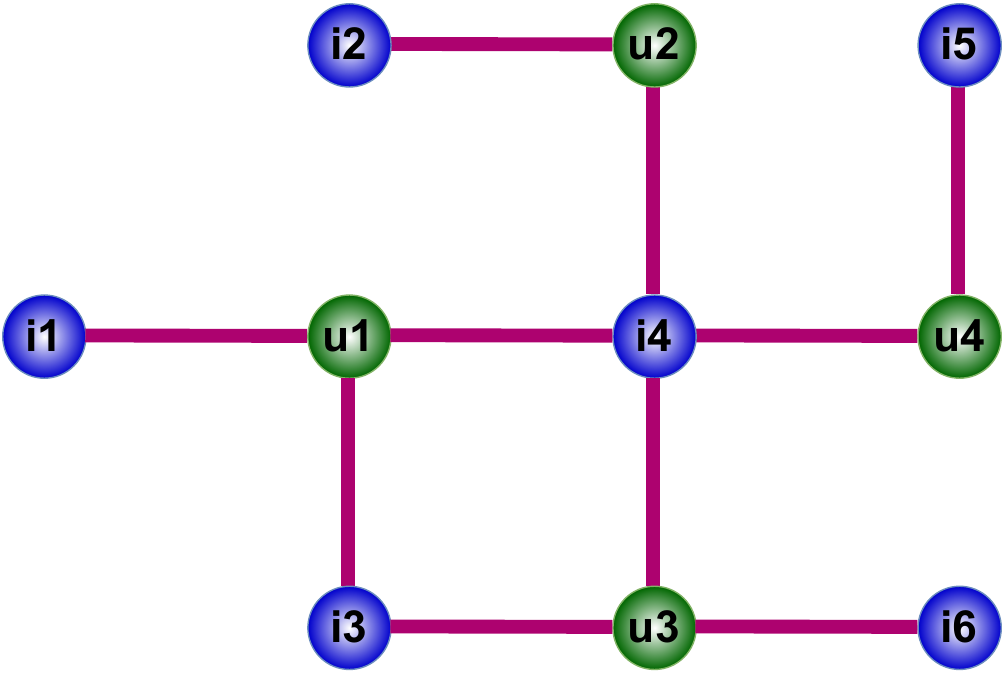}
		\caption{input graph}
	\end{subfigure}
	\begin{subfigure}[b]{0.33\linewidth}
		\includegraphics[width=\linewidth]{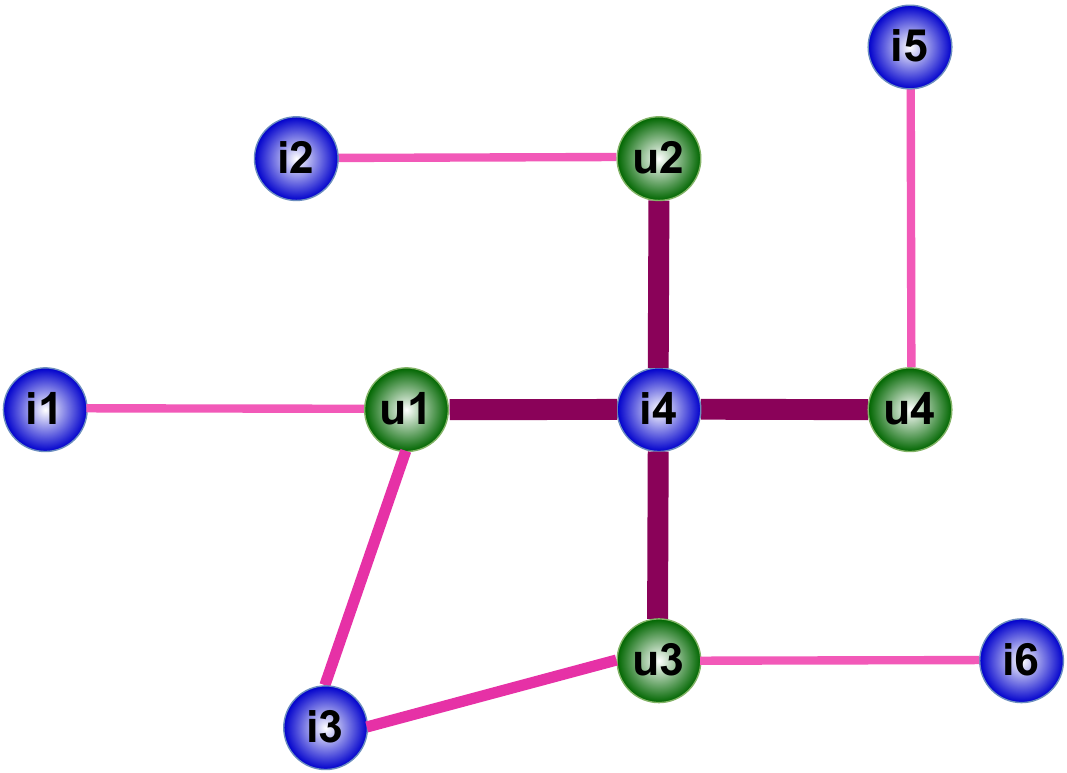}
		\caption{layer = 1}
	\end{subfigure}
	\begin{subfigure}[b]{0.33\linewidth}
		\includegraphics[width=\linewidth]{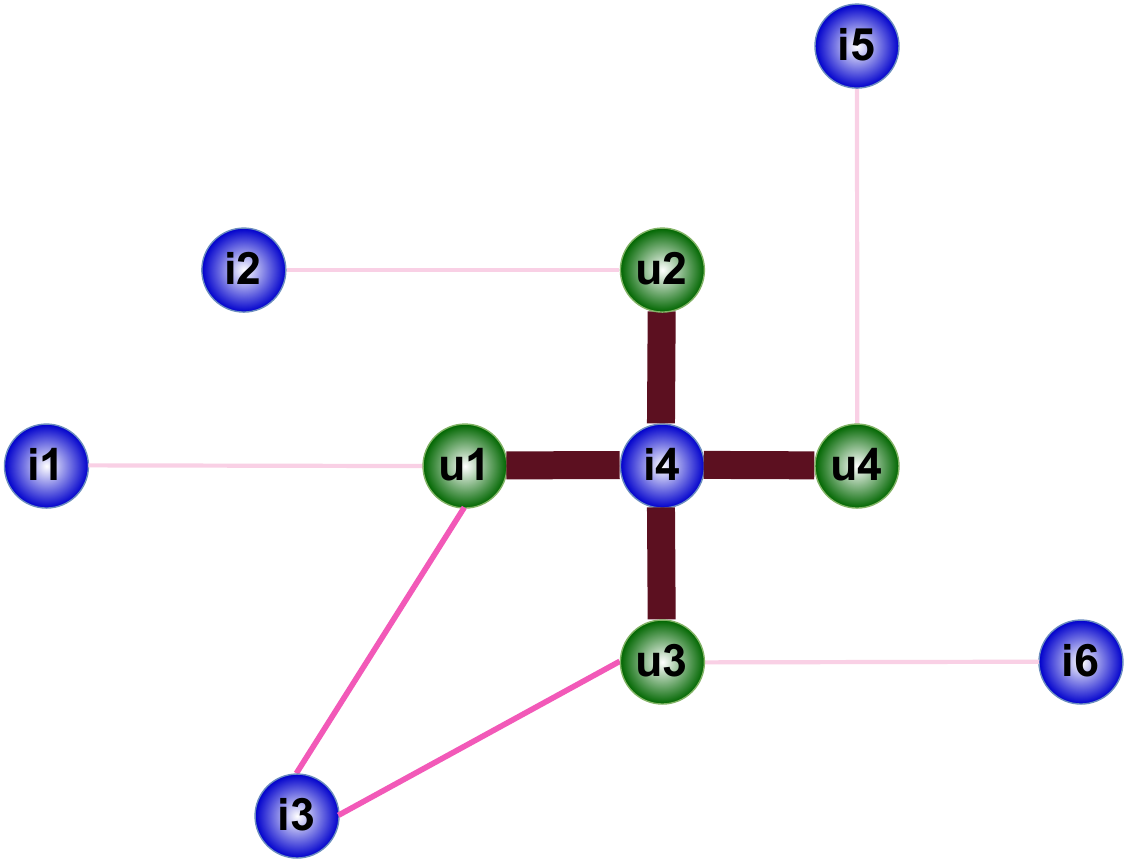}
		\caption{layer = 2}
	\end{subfigure}
	\caption{
		Illustration of the propagation process in a LightGCN model.}
	\label{fig:lightgcn_propagation}
\end{figure}

\begin{figure}[!h]
	\begin{subfigure}[b]{0.33\linewidth}
		\includegraphics[width=\linewidth]{images/lightlayer0}
		\caption{input graph}
	\end{subfigure}
	\begin{subfigure}[b]{0.33\linewidth}
		\includegraphics[width=\linewidth]{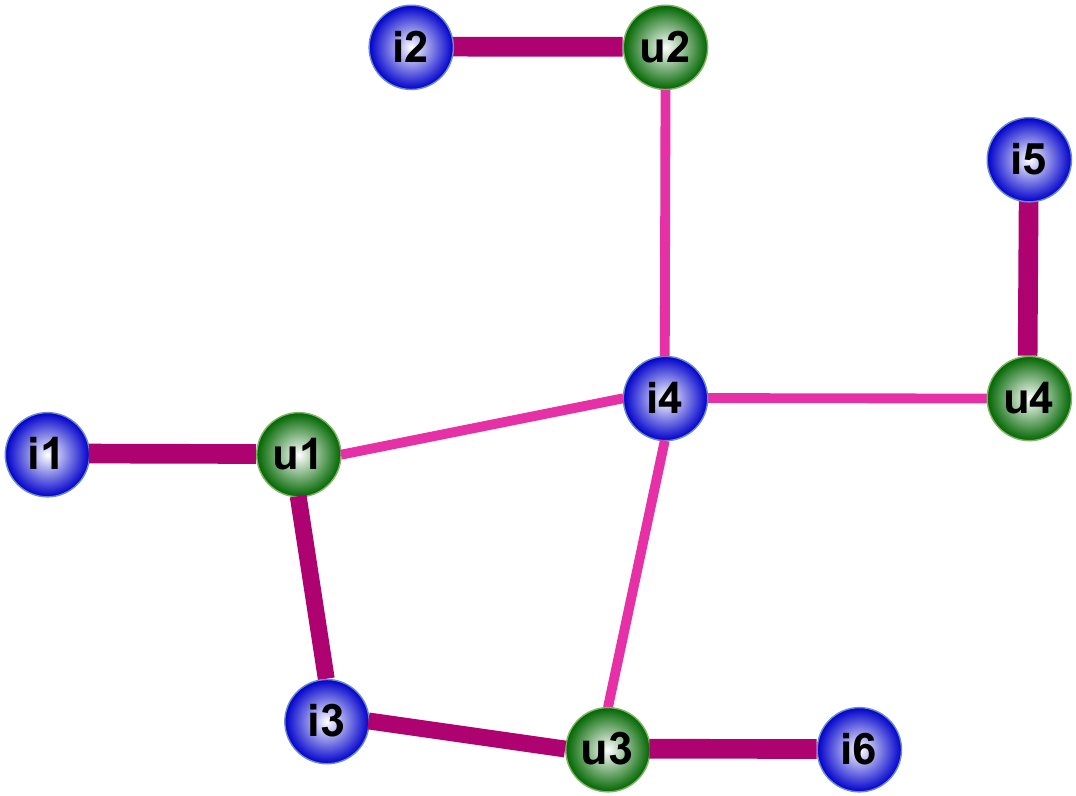}
		\caption{layer = 1}
	\end{subfigure}
	\begin{subfigure}[b]{0.33\linewidth}
		\includegraphics[width=\linewidth]{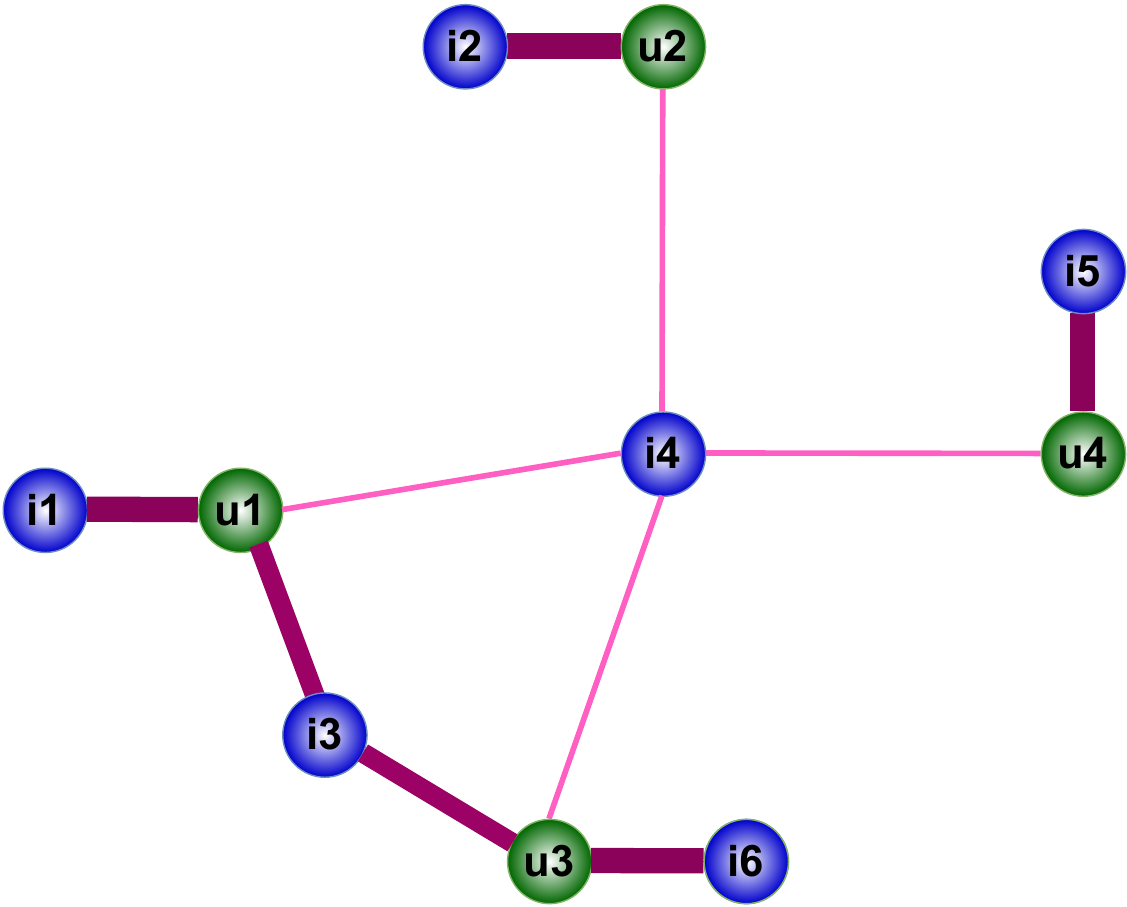}
		\caption{layer = 2}
	\end{subfigure}
	\caption{
		Illustration of the propagation process in our proposed method.}
	\label{fig:hetrofair_propagation}
\end{figure}

\subsection{Fairness-aware attention}
\label{sec:4.2}

In the previous section, we demonstrated the destructive effect of symmetric square root normalization on the representation of long-tail items:
according to Corollary \ref{corollary1}, nodes with larger embeddings have higher degrees and greater similarity scores, which harms fairness and exacerbates popularity bias by disproportionately favoring such nodes. To address this issue and prevent the embedding sizes from growing uncontrollably across multiple layers, it is crucial to control and reduce the embeddings' size.
To achieve this, we modify light graph convolution,
proposed by LightGCN~\cite{lightgcn}.
Specifically, since popularity bias arises from high degrees leading to large embeddings, which in turn result in higher similarity scores, this creates a recursive pattern. In our method, we incorporate the similarity scores from the previous layer as a normalization term in the current layer to balance the effects of degree and similarity. The modified version is as follows:

\begin{align}
	h_{u}^{\left(k\right)} =\sum_{i \in N\left(u\right)} \frac{1}{\sqrt{d_{u}}\sqrt{d_{i}} \times \delta \times \sigma(h_{u}^{\left(k-1\right)} \cdot h_{i}^{\left(k-1\right)})} h_{i}^{\left(k-1\right)},\label{eq:eq10}\\
	h_{i}^{\left(k\right)} =\sum_{u \in N\left(i\right)} \frac{1}{\sqrt{d_{u}}\sqrt{d_{i}} \times \delta \times \sigma(h_{i}^{\left(k-1\right)} \cdot h_{u}^{\left(k-1\right)})} h_{u}^{\left(k-1\right)},\label{eq:eq11}
\end{align}
where \(\sigma\left(\cdot\right)\) is the sigmoid function.
Parameter \(\delta\) plays a pivotal role in governing the extent to which
the item importance for user \(u\) is either amplified or diminished.
The sigmoid function transforms the dot product to a normalized range of \(\left[0,1\right]\). Therefore, the result of the sigmoid function for long-tail items will be close to zero, and for short-head items, it will be close to one. By incorporating the sigmoid of dot products along with the
above-mentioned normalization term, long-tail items achieve a smaller denominator than short-head items. This ensures that larger embeddings, which are more similar to each other, are divided by larger values and shrink more, while smaller embeddings are divided by smaller values and thus shrink less. Therefore messages from long-tail items (such as $i$) receive a higher importance coefficient than those from short-head items (such as $i'$). Consequently, this augmentation enhances the similarity between users and long-tail items,
compared to the previous case.
Conversely, messages from short-head items are propagated with lower coefficients.

This adjustment leads to changes in the graph structure obtained through successive propagation layers. Figures~\ref{fig:lightgcn_propagation} and \ref{fig:hetrofair_propagation} clearly illustrate these changes. In Figure~\ref{fig:lightgcn_propagation}, related to the LightGCN, we can see that during the application of different propagation layers, popular items establish stronger and closer connections with users, while the connections to less popular items weaken, symbolized by thinner edges. For example, these changes are shown by the thickening of the edges between users and the popular items $i_4$, and the thinning of the connections between user $u_1$ and item $i_1$. As we proceed through layers, this trend intensifies--high-degree items maintain dominance through robust connections, whereas low-degree items experience significantly reduced interaction influence, pushing them further from the users. In contrast, as shown in Figure~\ref{fig:hetrofair_propagation}, our proposed method strengthens connections between users and less popular items, making these edges thicker, while connections to popular items become relatively thinner, weakening their influence. Throughout the progression of layers, less popular items retain significant influence with strong connections to users, whereas popular items, though still linked, are pushed farther away from users and are less dominant in the graph structure. This not only prevents the over-enlargement of representations but also ensures that the effects of large representations are gradually transmitted through nodes' computational graphs.

\subsection{Heterophily feature weighting}
\label{sec:4.3}

The similarity function employed in the previous section is the dot product, a widely used measure for quantifying the alignment between two vectors. Despite its simplicity and computational efficiency, the dot product has a notable limitation: it is not trainable. This lack of trainability means it cannot adapt to the characteristics of the training data or optimize itself for specific tasks during the learning process. Consequently, it may fail to capture subtle relationships between users and items, which are critical for improving performance in recommendation tasks. To overcome this challenge, we propose a straightforward yet effective modification to the dot product, transforming it into a trainable similarity function. This enhancement enables the model to dynamically adjust similarity computations based on the training data.

In our approach, we introduce feature-specific weights to the user and item vectors, denoting them as $w_1$ for the first feature in users and items vector, $w_2$ for second feature and so forth. These weights are not predefined but are learned adaptively during the training process. This process is integrated into the message-passing layers, allowing the model to incorporate feature-specific adjustments at every stage of information propagation. We call this method the heterophily-aware similarity function because it operates at the feature level and assigns distinct, non-equal weights to capture heterophilous relationships. So,  instead of assigning a unique weight  \(\sigma(h_{u}^{\left(k\right)} \cdot h_{i}^{\left(k\right)})\) to the message between \(u\) and \(i\), we define the heterophily-aware feature weighting as follows:
\begin{equation}
	w_{ui}^{\left(k\right)} = \delta \times \sigma\left(\left(h_{u}^{\left(k-1\right)} \cdot h_{i}^{\left(k-1\right)}\right) W^{\left(k\right)}\right),\label{eq:eq12}
\end{equation}
where \(W^{\left(k\right)} \in \mathbb{R}^{1 \times d}\)
is a trainable linear transformation, \(d\) is the dimension of node embeddings and \(w_{ui}^{(k)} \in \mathbb{R}^{1 \times d}\).
Therefore, Equations \ref{eq:eq10} and \ref{eq:eq11} are revised as follows:
\begin{align}
	h_{u}^{\left(k\right)} =\sum_{i \in N\left(u\right)} \frac{1}{\sqrt{d_{u}}\sqrt{d_{i}}}
	h_{i}^{\left(k-1\right)} \oslash w_{ui}^{\left(k\right)} ,  \label{eq:eq13}\\
	h_{i}^{\left(k\right)} =\sum_{u \in N\left(i\right)} \frac{1}{\sqrt{d_{u}}\sqrt{d_{i}}}
	h_{u}^{\left(k-1\right)}  \oslash  w_{ui}^{\left(k\right)}, \label{eq:eq14}
\end{align}
where $\oslash$ is the Hadamard division operator.
For two vectors $a$ and $b$,
vector $c=a\oslash b$ is defined as follows:
$c[i]=a[i] / b[i]$.

Algorithm \ref{alg:alg1} outlines the fair embedding generation process. For every node in the graph and for each layer, we act as the follow.
First in Line~\ref{fairness-aware-attention}, we compute the fairness-aware attention as the dot products of the embeddings of users and items, for all items in the users' training profile.
Then, we transform these attention weights into a new vector space to consider distinct attention for each feature, as shown in Line~\ref{feature-weighting}. In Line~\ref{message-and-aggregate}, by taking the inverse of the weight in the message passing process, we enhance the likelihood of suggesting long-tail items to users while simultaneously diminishing the dominance of popular items. Finally, in Line~\ref{layer-combination}, we combine the embeddings from different layers to obtain
the final representations of items/users.

\begin{algorithm}[H]
	\caption{FairEmbeddingGeneration}\label{alg:alg1}
	\begin{algorithmic}[1]
		\State \textbf{Input:} Graph \(G\left(V,E\right)\); input features matrix $X \in \mathbb{R}^{|V|\times d}$; number of layers $K$; weight matrices $W^{\left(k\right)} \in \mathbb{R}^{1 \times d}, \forall k \in \{1,\ldots,K\}$; parameter $\delta$$\cdot$
		\State \textbf{Output:} Matrix representation $Z  \in \mathbb{R}^{|V|\times d}$ where each row in the matrix is the representation of node $v, \forall v \in V$.
		\State $h_v^{(0)} \gets x_v, \forall v \in V$
		\For{$v \in V$}
		\For{$k=1$ to $K$}
		\For{$i \in N(v)$}\label{neighbor-iterate}
		\State $s \gets h_v^{\left(k-1\right)} \cdot h_i^{\left(k-1\right)}$\label{fairness-aware-attention}
		\State $w_{vi}^{\left(k\right)} \gets \delta \times \sigma(\begin{bmatrix}s\end{bmatrix} \times W^{\left(k\right)}) $\label{feature-weighting}
		\Comment{By $\begin{bmatrix}s\end{bmatrix}$, we mean the $1 \times 1$ matrix, consisting of scalar $s$.}
		\EndFor\label{end-neighbor-iterate}
		\State $h_{v}^{\left(k\right)} \gets \frac{1}{\sqrt{d_v}} \cdot \sum_{i \in N(v)} \frac{1}{  \sqrt{d_{i}}} \cdot h_{i}^{\left(k-1\right)} \oslash w_{vi}^{(k)}$\label{message-and-aggregate}
		\EndFor
		\State $Z\left[v\right] \gets \frac{1}{K} \sum_{k=0}^{K} h_v^{\left(k\right)}$\label{layer-combination}
		\EndFor
		\State \textbf{return} $Z$
	\end{algorithmic}
\end{algorithm}

\subsection{Model training}

To optimize model parameters, we use the well known pairwise Bayesian Personalized Ranking (BPR) loss function~\cite{bpr}.
This loss function is designed to predict higher rank scores for items that users interact with, compared to unobserved items:
\begin{equation}
	L_{BPR} = \sum_{\left(u,i,j\right) \in T}-\ln\sigma\left(\tilde{y}_{ui} - \tilde{y}_{uj}\right) + \beta ||\Theta||^{2},
	\label{eq:eq15}
\end{equation}
where 
\(T = \left\{\left(u,i,j\right) \mid i \in O_{u}^{+} \wedge j \in I \setminus O_{u}^{+}\right\}\) indicates the training set and 	
\(O_u^+\) is user interacted items during training.
The hyperparameter \(\beta\) controls regularization to prevent overfitting, \(\Theta\) is the model's parameters that must be learned during the optimization process, \(\tilde{y}_{ui}\) and \(\tilde{y}_{uj}\) are user interest scores for items \(i\) and \(j\) respectively, obtained through the dot product of the embeddings of
the user and item.
\subsection{The algorithm}

In this section, we present the process of fair embedding generation,
along with the algorithmic steps that
our recommendation algorithm follows during a batch.
Algorithm~\ref{alg:alg2} describes one batch training process, in our model.
First, in Lines~\ref{start_init}-\ref{end_init}, we initialize model parameters using
the Xavier initialization method~\cite{xavier}.
The embedding matrix, updated during the training process, is also initialized.
In Line~\ref{fair_embedding}, we call the FairEmbeddingGeneration function,
presented in Algorithm~\ref{alg:alg1} to calculate users' and items' embeddings. Subsequently, for every user-item pair in the current batch $O_{batch}$,
we choose a negative sample uniformly at random with replacement,
as explained in~\cite{bpr}.
In Lines~\ref{score_predict}-\ref{backpropagate}, we compute the interest score of the user for positive and negative items, calculate the loss, and finally backpropagate the error to update model parameters.

\begin{algorithm}[H]
	\caption{One batch training process.\label{alg:alg2}}
	\begin{algorithmic}[1]
		\State \textbf{Input:} Graph \(G\left(V,E\right)\), the number of layers $K$.
		\State \textbf{Output:} Updated model parameters.
		\State $X \gets$ Xavier initialization\label{start_init}
		\Comment{Initialize input features}
		\For{$k=1$ to $K$}
		\State $W^{\left(k\right)} \gets$ Xavier Initialization
		\Comment{Initialize weight matrices}
		\EndFor\label{end_init}
		\State  $H \gets X$
		\Comment{Initialize embedding matrix}
		\State $L \gets 0$
		\Comment{Initialize BPR loss}
		\State $H \gets$ FairEmbeddingGeneration($G,H,K$)\label{fair_embedding}
		\For{$\left(u,i\right) \in O_{batch}$}
		\Comment{$O_{batch}$ is the set of $\left(u,i\right)$ pairs in the current batch}
		\State Sample a negative item $j$ uniformly at random from $I \setminus O_u^+$
		\State $\tilde{y}_{ui} \gets H\left[u\right] \cdot H\left[i\right]$\label{score_predict}
		\State $\tilde{y}_{uj} \gets H\left[u\right] \cdot H\left[j\right]$
		\State $L \gets L + \left(- \ln\sigma\left(\tilde{y}_{ui}-\tilde{y}_{uj}\right)+ \beta ||\Theta||^2\right)$
		\State \textbf{Backpropagate and update} model parameters for every node in $G$, using gradient descent on $L$\label{backpropagate}
		\EndFor
		\State \textbf{return} $\Theta$
	\end{algorithmic}
\end{algorithm}

\paragraph{Time complexity}
It is imperative for recommendation systems to deliver not only  satisfactory performance but also operate efficiency. Striking a balance between these two criteria is often viewed as a trade-off. In this context, we delve into the analysis of the time complexity of our proposed model.
If we neglect Lines~\ref{neighbor-iterate}-\ref{end-neighbor-iterate} of Algorithm \ref{alg:alg1}, our model
becomes the same as the LightGCN model, which has the time complexity of \(O\left(|E|Kd\right)\) in the graph convolution part,
where \(K\) and \(d\) can be seen as constants~\cite{graph-augment}.
Thanks to parameter sharing, the primary operation in Equation \ref{eq:eq12},
which appears in Lines~\ref{fairness-aware-attention}-\ref{feature-weighting},
is the multiplication of a $1\times 1$ matrix with a $1 \times d$ matrix,
which can be done in \(O\left(d\right)\) time.
Hence, the additional time complexity that our proposed method imposes on the basic LightGCN model is \(O\left(d\right)\).
According to this analysis, the time complexity of our proposed graph convolution process is \(O\left(|E|\left(Kd+d\right)\right) = O\left(|E|Kd\right)\),
which is the same as the LightGCN model.

\paragraph{Memory complexity}
In addition to runtime, memory consumption is an important factor for the scalability of GNN-based recommender systems. We analyze HetroFair's memory complexity from two perspectives: static parameter storage and dynamic training footprint.
Theoretically, the memory required for our model’s parameters is comparable to that of efficient baselines such as LightGCN. The additional feature-specific weight matrices \(W^{(k)}\) are shared across all nodes and scale only with the embedding dimension \(d\), contributing a negligible \(O(d)\) overhead to the parameter space. The primary static memory cost remains the storage of node embeddings and the graph structure, which is \(O(|V|d + |E|)\), consistent with standard GNNs.
However, it is important to distinguish this from the peak memory footprint during model training. Empirically, HetroFair’s training phase consumes more memory than LightGCN. This increase is not due to additional parameters but rather to the computational overhead of our fairness-aware attention mechanism. Computing dynamic, per-edge attention weights during message passing requires storing intermediate values for backpropagation, which is more memory-intensive than the sparse matrix multiplications used in LightGCN.
%	In addition to runtime efficiency, memory consumption is a critical factor for the scalability of graph-based recommender systems. Our model is designed to be memory-efficient and comparable in space complexity to existing GNN-based recommendation methods such as LightGCN.

We argue that the additional training cost represents a reasonable trade-off given the significant and consistent gains in both fairness and recommendation accuracy that our model delivers across six datasets. Moreover, this memory overhead is confined primarily to the training phase. During inference, when user and item embeddings can be pre-computed and served, the memory requirements are substantially lower and once again comparable to those of other GNN-based recommender systems.

\section{Experiments}
\label{sec:experiments}

In this section, we report the results of our extensive experiments, conducted over six real-world datasets, and demonstrate the high accuracy and performance of our proposed method.

\subsection{Experimental setup}

\subsubsection{Datasets}

We examine the algorithms over six well-known datasets:
Epinions\footnote{\url{https://snap.stanford.edu/data/soc-Epinions1.html}}~\cite{apda},
Amazon-Movies\footnote{\label{amazondata-footnote}\url{https://cseweb.ucsd.edu/~jmcauley/datasets/amazon/links.html}}~\cite{amazon-data}, Amazon-Electronics\textsuperscript{\ref{amazondata-footnote}}~\cite{amazon-data},
Amazon-Health\textsuperscript{\ref{amazondata-footnote}}~\cite{amazon-data},
Amazon-Beauty\textsuperscript{\ref{amazondata-footnote}}~\cite{amazon-data} and Amazon-CDs\textsuperscript{\ref{amazondata-footnote}}~\cite{amazon-data}.
For the amazon review datasets, we adopt a $10$-core setting,
filtering out users and items with less than $10$ interactions.
In the case of Epinions, we utilize the data released by APDA ~\cite{apda}.
Table \ref{tbl:dataset} depicts the statistics of the datasets.
The data highlight the diverse nature of our evaluation testbed, with significant variation in scale—from the smaller, denser Beauty and Health datasets to the larger, sparser Movies and Electronics datasets. This diversity is crucial for validation, as their density suggests that popularity bias and long-tail issues may manifest differently across them. Furthermore, the homophily rate spans a wide spectrum, from highly heterophilic (e.g., Movies at 0.14) to strongly homophilic (e.g., Health at 0.89), providing a robust environment for evaluating the effectiveness of our heterophily-aware components.
\begin{table}[!t]
	\centering
	\caption{Datasets statistics.\label{tbl:dataset}}
	\begin{tabular}{@{}l c c c c c @{}}
		\\
		\toprule
		Dataset & \#users & \#items & \#interactions & Density & Homophily rate\\
		\midrule
		Electronics & 20242 & 11589 & 347393 & 0.148\% & 0.15\\
		Epinions & 11496 & 11656 & 327942 & 0.245\% & N/A\footnotemark\\
		Beauty & 1340 & 733 & 28798 & 2.93\% & 0.79\\
		Health & 2184 & 1260 & 55076 & 2\% & 0.89\\
		Movies & 33326 & 21901 & 958986 & 0.131\% & 0.14\\
		CDs & 15592 & 16184 & 445412 & 0.176\% & 0.34\\
		\bottomrule
	\end{tabular}
\end{table}
\subsubsection{Baselines}
\footnotetext{The Epinions dataset lacks the category information necessary for computing the homophily rate.}
In order to evaluate the performance of HetroFair, we consider the following state-of-the-art recommendation models:

\begin{itemize}
	\item LightGCN~\cite{lightgcn}: This model simplifies graph convolution by excluding non-linearity and weight transformation. The final embedding is computed through a weighted sum of embeddings learned across various layers.
	Following the original paper, we use \(a_k = \frac{1}{K+1}\)
	as the coefficient of layer $k$, during layer combination.
	\item Reg~\cite{popularitymetrics}: This is a regularization-based method that considers the correlation between items' popularity and their predicted score by a model. Following the original paper, we tune the hyperparameter $\gamma$ carefully to achieve comparable results in accuracy and fairness.
	\item PC~\cite{popularitymetrics}: This is a pos-processing (re-ranking) approach that manipulates predicted scores to calculate new scores as a function of item popularity. We tune the parameter $\alpha$ and $\beta$ as described in the original paper.
	\item r-AdjNorm~\cite{adjnorm}: This model adjusts the strength of the normalization term to control the normalization process during neighborhood aggregation,
	aiming to improve results especially for low-degree items. Following the advice in the paper, we fine-tune the parameter \(r\) within the range \([0.5,1.5]\) with a step size of \(0.05\).
	\item APDA~\cite{apda}: This method assigns lower weights to connected edges during the aggregation process and employs residual connections to achieve unbiased and fair representations for users and items in graph collaborative filtering. Following the approach outlined in the original paper, we fine-tune the residual parameter \(\lambda\) within the range \(\left[0, 1.0\right]\).
	\item PopGo~\cite{zhang2024robust}: This method mitigates popularity bias by training a shortcut model to estimate interaction-wise shortcut degrees and adjusting CF model predictions accordingly. It quantifies shortcuts based on popularity representations of user-item pairs. Following the original paper, we fine-tune the hyperparameters 
	$\tau_1$ and $\tau_2$ within the recommended ranges to ensure optimal performance.
\end{itemize}
\subsubsection{Evaluation metrics}

\paragraph{Fainess metrics}
There are two distinct approaches to examine fairness on the items' side in recommender systems:

\begin{itemize}
	\item Statistical Parity (or Demographic Parity): this metric necessitates that different groups of items or similar individuals are treated similarly,
	without considering users' interests. This approach fails to consider the users' side and increases the risk of damaging users' satisfaction.
	\item Equal Opportunity: this metric mandates that groups of items benefiting from the outcome are afforded equal opportunities.
	It strives to treat all items fairly within users' profiles,
	regardless of their popularity.
\end{itemize}

The authors  of \cite{popularitymetrics} introduce two metrics based on equal opportunity:
\begin{align}
	PRU = -\frac{1}{|U|} \sum_{u \in U} SRC\left(d_{\tilde{O}_{u}^{+}}, rank_{u}\left(\tilde{O}_{u}^{+}\right)\right),\label{eq:eq16}\\
	PRI = -\sum_{i \in I} SRC\left(d_{i}, \frac{1}{U_{i}}\sum_{u \in U_{i}} rank_{u}\left(i\right)\right)\cdot\label{eq:eq17}
\end{align}
Here, \(SRC\left(\cdot,\cdot\right)\) calculates Spearman's rank correlation, \(\tilde{O}_{u}^{+}\) represents items in the test profile of user $u$, \(rank_{u}\left(x\right)\) signifies the ranking of item $x$ that the model predicts for user $u$, \(U_{i}\) is a set of users that item \(i\) already exists in their test profiles
and \(d_{x}\) is the degree of node \(x\).
While PRU investigates items' position bias for individual users' recommendation lists, PRI considers item-side fairness across all users. A large value of the later may result in long-tail items having limited opportunities to become popular \cite{popularitymetrics}. In general, the methods' performance in promoting fairness improves as the values of these two criteria get closer to $0$.

\paragraph{Accuracy metrics} As ranking holds paramount importance in recommender systems, we go beyond conventional machine learning metrics and incorporate rank-aware metrics to evaluate the performance of our model. Specifically, we employ MRR (mean reciprocal rank), MAP (mean average precision), and NDCG (normalized discounted cumulative gain) as accuracy metrics.
In the following, we provide a brief description of each metric.
NDCG denotes the normalized discounted cumulative gain and 
$NDCG@N$ for top $N$ recommended items is defined as follow:
\begin{equation}
	\small
	NDCG@N = \frac{DCG@N}{IDCG@N},\label{eq:eq18} 
\end{equation}
where \(DCG@N \) and \(IDCG@N\) are defined as follows:
{\small
	\begin{align}
		DCG@N = \sum_{i=1}^{N} \frac{2^{r\left(i\right)}-1}{log_{2}(i+1)},\label{eq:eq19}\\
		IDCG@N = \sum_{i=1}^{REL_N}\frac{2^{r\left(i\right)} - 1}{log_{2}(i+1)}.
		\label{eq:eq20}
	\end{align}
}
Here, \(REL_N\) is a sorted list of the $N$ most relevant items in ascending order, \(r(i)\) is the relevance of the \(i^{th}\) ranked item, which is either \(0\) or \(1\) respectively for irrelevant and relevant items.
In our experiments, we set $N=20$, and this value is applied consistently across all metrics.

MRR
is concerned with the position of the first relevant item:
\begin{equation}
	MRR = \frac{1}{|U|}\sum_{u \in U}\frac{1}{first_{u}},\label{eq:eq21}
\end{equation}
where \(first_{u}\) refers to the position of the first relevant item.
MAP is computed as the mean of average precision:
\begin{equation}
	MAP = \frac{1}{|U|}\sum_{u \in U} AP_u,\label{eq:eq22}
\end{equation}
where \(AP_{u}\) is the area under precision-recall curve.
$AP$ is calculated using the following formula:
\begin{equation}
	AP = \frac{1}{number\ of\ relevant\ documents}\sum_{i=1}^{N}P@i \cdot r(i),
	\label{eq:eq23}
\end{equation}
where \(P@i\) is precision at $i$.

\subsubsection{Hyper-parameters setting}

For all methods, we use the same early stopping mechanism, stopping training after \(15\) epochs if NGCF does not show  improvement in these \(15\) epochs.
The embeddings dimension is set to \(64\) for LightGCN and AdjNorm,
\(256\) for the APDA model, and \(128\) for our model.
The only exception is the 
Movies dataset where due to hardware limitations,
we set the embeddings dimension to 64.
The batch size for LightGCN and AdjNorm is set to \(1024\),
as proposed in their papers, while for APDA we set it to \(2048\).
The learning rate for all the baselines is set to \(0.001\).
For all the methods, we set both the regularization hyper-parameter \(\beta\)
and the weight decay parameter
to \(0.0001\).
The learning rate for our model is set to \(0.0005\).
We set the number of propagation layers to \(3\) in LightGCN and r-AdjNorm,
and to \(4\) in APDA and HetroFair.
For the PopGo model, the optimal number of layers varies between $2$ and $3$ across different datasets.
%	The embedding size is set based on paper $64$.
The embedding size is set to $64$.
For the two hyperparameter $\tau_1$ and $\tau_2$, various values are tested within the same range specified in the paper to achieve the best results.

\subsection{Performance comparison}
\subsubsection{Overall performance across metrics}
Table~\ref{tbl:performance_comparison} shows the results of HetroFair in comparison to the other methods.
Over each dataset and for each metric, the best performance is marked in bold, and the second-best performance is underlined.
The findings indicate that our method outperforms the other methods,
not only in the fairness metrics but also in the accuracy metrics.
Only in the two small and dense datasets (Health and Beauty),
the PopGo model performs well on certain metrics. However, our method also performs well on these datasets while achieving a better balance between fairness and accuracy.
These results show that our method is better fitted with the data and
takes into account their distinct features.
In the case of baselines, it is observable that although the Reg method often achieves commendable fairness metrics compared to other baselines, its accuracy is significantly lower than that of other methods. This suggests that regularization-based methods may reduce the performance of recommendations for end-users. Regarding the PopGo model, although it improves PRI in some cases, it fails to simultaneously enhance PRU, indicating its limited ability to adjust item rankings at the top of the recommendation list.

An important observation from Table~\ref{tbl:performance_comparison} is the consistent trade-off between accuracy and fairness among the baseline models. For instance, Reg often improves fairness metrics over the standard LightGCN, but this typically comes at the cost of accuracy (lower NDCG). This behavior reflects the inherent difficulty of mitigating popularity bias without degrading recommendation relevance. In contrast, our proposed HetroFair model consistently achieves a more favorable balance across datasets: it improves fairness metrics such as PRU and PRI (bringing them closer to zero) while simultaneously enhancing accuracy metrics. This suggests that our model not only addresses item-side unfairness effectively, but also leverages the underlying graph structure to maintain or improve overall recommendation quality.

\begin{table}[htbp]
	\centering
	\caption{Performance comparison of the models, over the used datasets.\label{tbl:performance_comparison}}
	\scalebox{0.85}{
		\begin{tabular}{c c c c c c c c c |c }
			\toprule
			Dataset & Metric &  LightGCN & Reg & PC & r-AdjNorm & APDA & PopGo & Our model & w/o Hetro \\
			\midrule
			\multirow{5}{*}{Epinions} & NDCG & 0.0789 & 0.0746 & 0.0778 & 0.0798 & 0.0859 & \underline{0.0879} & \textbf{0.0895} & 0.0889 \\
			& MRR & 0.1361 & 0.1288 & 0.1336 & 0.1369 & 0.1471 &  \underline{0.1515} & \textbf{0.1525} & 0.1522 \\
			& MAP & 0.0329 & 0.0307 & 0.0324 & 0.0339 & 0.0366 & \underline{0.0376} & \textbf{0.0379} & 0.0377 \\
			& PRU & 0.5233 & 0.5114 & 0.5168 & \underline{0.5077} & 0.5188 & 0.5976 & \textbf{0.4706} & 0.5054 \\
			& PRI & 0.5129 & 0.5065 & 0.5071 & 0.5366 & \underline{0.4983} & 0.5475 & \textbf{0.4384} & 0.4573 \\
			\midrule
			\multirow{5}{*}{Electronics} & NDCG & 0.0463 & 0.0407 & 0.0422 & 0.0461 & \underline{0.0517} & 0.0495 & \textbf{0.0525} & 0.0488 \\
			& MRR & 0.0645 & 0.0552 & 0.0610 & 0.0640 & \underline{0.0708} & 0.0687 & \textbf{0.0733} & 0.0676 \\
			& MAP & 0.0221 & 0.0188 & 0.0203 & 0.0219 & \underline{0.0250} & 0.0236 & \textbf{0.0256} & 0.0232 \\
			& PRU & 0.5216 & \underline{0.4779} & 0.4919 & 0.5300 & 0.5548 & 0.5208 & \textbf{0.4600} & 0.5020 \\
			& PRI & 0.5417 & \underline{0.5077} & 0.5325 & 0.5661 & 0.5692 & 0.5086 & \textbf{0.4462} & 0.4805 \\
			\midrule
			\multirow{5}{*}{CDs} & NDCG & 0.1231 & 0.1230 & 0.1245 & 0.1178 & \underline{0.1379} & 0.1228 & \textbf{0.1449} & 0.1346 \\
			& MRR & 0.1715 & 0.1723 & 0.1760 & 0.1638 & \underline{0.1940} & 0.1752 & \textbf{0.2017} & 0.1882 \\
			& MAP & 0.0611 & 0.0611 & 0.0617 & 0.0563 & \underline{0.0710} & 0.0607 & \textbf{0.0747} & 0.0680 \\
			& PRU & 0.3304 & 0.3287 & 0.3270 & \underline{0.3262} & 0.3265 & 0.4487 & \textbf{0.2972} & 0.3266 \\
			& PRI & 0.2824 & 0.3069 & 0.2974 & 0.3193 & 0.2763 & \underline{0.2530} & \textbf{0.2264} & 0.2495 \\
			\midrule
			\multirow{5}{*}{Health} & NDCG & 0.1259 & 0.1202 & 0.1224 & 0.1271 & \underline{0.1314} & 0.1311 & \textbf{0.1334} & 0.1351 \\
			& MRR & 0.1986 & 0.1902 & 0.196 & 0.2028 & \underline{0.2073} & 0.2044 & \textbf{0.2112} & 0.2093 \\
			& MAP & 0.0595 & 0.0570 & 0.0574 & 0.0610 & 0.0637 & \underline{0.0644} & \textbf{0.0656} & 0.0669 \\
			& PRU & 0.5162 & 0.4980 & \underline{0.4893} & 0.5270 & 0.5205 & \textbf{0.4075} & 0.4922 & 0.4830 \\
			& PRI & 0.5303 & 0.5116 & 0.5285 & 0.5240 & 0.5555 & \textbf{0.3997} & \underline{0.5028} & 0.4804 \\
			\midrule
			\multirow{5}{*}{Beauty} & NDCG & 0.2113 & 0.2161 & 0.2116 & 0.2136 & 0.2274 & \textbf{0.2382} & \underline{0.2308} & 0.2302 \\
			& MRR & 0.2525 & 0.2603 & 0.2540 & 0.2520 & 0.2790 & \textbf{0.2870} & \underline{0.2824} & 0.2798 \\
			& MAP & 0.1200 & 0.1240 & 0.1203 & 0.1235 & 0.1354 & \textbf{0.1415} & \underline{0.1364} & 0.1389 \\
			& PRU & 0.4428 & 0.4407 & 0.04385 & 0.4188 & 0.4418 & \textbf{0.4091} & \underline{0.4139} & 0.3117 \\
			& PRI & 0.2164 & 0.2071 & 0.2151 & 0.2089 & 0.2266 & \underline{0.1900} & \textbf{0.1821} & 0.1270 \\
			\midrule
			\multirow{5}{*}{Movies} & NDCG & 0.0672 & 0.0673 & 0.0690 & 0.0676 & \underline{0.0715} & 0.0682 & \textbf{0.0777} & 0.0741 \\
			& MRR & 0.0966 & 0.0957 & 0.0971 & 0.0955 & \underline{0.1011} & 0.0979 & \textbf{0.1093} & 0.1048 \\
			& MAP & 0.0304 & 0.0298 & 0.0308 & 0.0299 & \underline{0.0331} & 0.0314 & \textbf{0.0365} & 0.0340 \\
			& PRU & 0.3098 & \underline{0.3051} & 0.3082 & 0.3245 & 0.3384 & 0.5034 & \textbf{0.3045} & 0.3251 \\
			& PRI & 0.3951 & \underline{0.3714} & 0.3829 & 0.4436 & 0.3862 & 0.2818 & \textbf{0.3355} & 0.3590 \\
			\bottomrule
		\end{tabular}
	}
\end{table}

\subsubsection{Analysis by dataset homophily}
A deeper analysis of our results, when viewed through the lens of the datasets' homophily rates, reveals a compelling narrative about HetroFair's specific strengths. Our proposed model demonstrates its most significant performance gains on datasets characterized by low to moderate homophily, where user preferences are more diverse. For instance, on highly heterophilic datasets such as Movies and Electronics, and moderately heterophilic ones like Epinions and CDs, HetroFair consistently and substantially outperforms all baseline methods across nearly every accuracy and fairness metric. 
This suggests that its fairness-aware attention and heterophily-centric design are particularly effective at capturing signals from diverse user-item interactions.

Conversely, on the two most homophilic datasets, Health and Beauty, the performance gap narrows as the interactions become more uniform. Even in these settings, however, HetroFair remains highly competitive. For example, on the Health dataset, our model achieves the highest accuracy scores across all three performance metrics. On the Beauty dataset, it secures the second-best performance for accuracy while delivering the top performance for the PRI fairness metric. While other models like PopGo also show strong results on certain metrics in these environments, our model’s ability to consistently deliver a leading or competitive performance highlights its robustness. 
This overall pattern validates our model's core design principle: HetroFair is explicitly built to be heterophily-aware, and its capacity to excel in those conditions while remaining a top contender elsewhere confirms its effective and well-rounded architectural design.

\subsection{Ablation study}

In this section, we explore the impact of different components of 
our proposed model on its performance.
More precisely, 
we compare HetroFair against the situation wherein
the heterophily feature weighting component is not used.
We refer to this case as w/o Hetro. In addition to that we test different initialization strategies for the feature-specific weights.
The results of our analysis are available in
Table~\ref{tbl:performance_comparison} and Figure~\ref{fig:init_barplots}.. 

\subsubsection{Heterophily feature weighting}
To analyze the effect of this module, we disable the feature-specific weighting mechanism while keeping all other components unchanged, and evaluate its contribution to overall performance across accuracy and fairness metrics:
\begin{itemize}
	\item
	When we discard the heterophily feature weighting component,
	over the datasets Epinions, Electronics, CDs and Movies,
	all the metrics deteriorate.
	This observation indicates that the heterophily feature weighting component
	effectively improves both fairness and accuracy metrics.
	\item
	Over the Health and Beauty datasets,
	w/o Hetro achieves better results than HetroFair.
	One possible explanation for this phenomenon is over-smoothing.
	In the case of w/o Hetro,
	our experiments are conducted using 2 GNN layers,
	which is the optimal value for it.
	However,
	we run HetroFair with 4 GNN layers, which is its default value. The diameter of the Health and Beauty datasets is 8, a very small value to set the number of GNN layers to 4.
	It is known that in order to avoid over-smoothing in graph neural networks,
	the number of GNN layers should be considerably less than the diameter of
	the graph dataset.
	When for these two datasets we decrease the number of GNN layers from 4 to
	2, the performance of HetroFair significantly improves and it outperforms w/o Hetro.
	Over the Health dataset, the PRU and PRI metrics drop from \(0.4922\) and \(0.5028\) to \(0.4622\) and \(0.4794\), respectively,
	outperforming PRU and PRI of w/o Hetro.
	The accuracy metrics also increase, for example,
	NDCG increases from \(0.1334\) to \(0.1380\).
	The same trends hold for the Beauty dataset.
\end{itemize}

\subsubsection{Feature weight initialization}
We conduct an ablation study to evaluate the robustness of the model with respect to the initialization of the trainable feature-specific weights $\mathbf{W}^{(k)}$. Specifically, we compare three initialization strategies: zero initialization, random normal initialization, and Xavier initialization (used in the main model). As illustrated in Figure~\ref{fig:init_barplots}, we assess model performance across three datasets (Epinions, Health, and Beauty) and three key metrics: NDCG, PRU, and PRI.
While all three schemes yield comparable results, Xavier initialization consistently provides a more favorable balance between accuracy and fairness. Zero initialization tends to yield weaker fairness outcomes, likely due to the absence of early asymmetry in gradient flow, which limits the model’s ability to differentiate between features during early training. Random normal initialization introduces stochastic variability, but occasionally leads to less stable fairness metrics. Xavier initialization, by preserving the variance of input and output signals, avoids saturation and ensures more effective gradient propagation.
These observations indicate that the choice of initialization, although not dramatically altering overall trends, can affect stability and fairness. Xavier initialization offers a stable and generalizable solution, making it a reliable choice for initializing feature weights in fairness-aware GNN models.

\begin{figure}[!h]
	\centering
	
	\begin{subfigure}[b]{0.32\linewidth}
		\includegraphics[width=\textwidth]{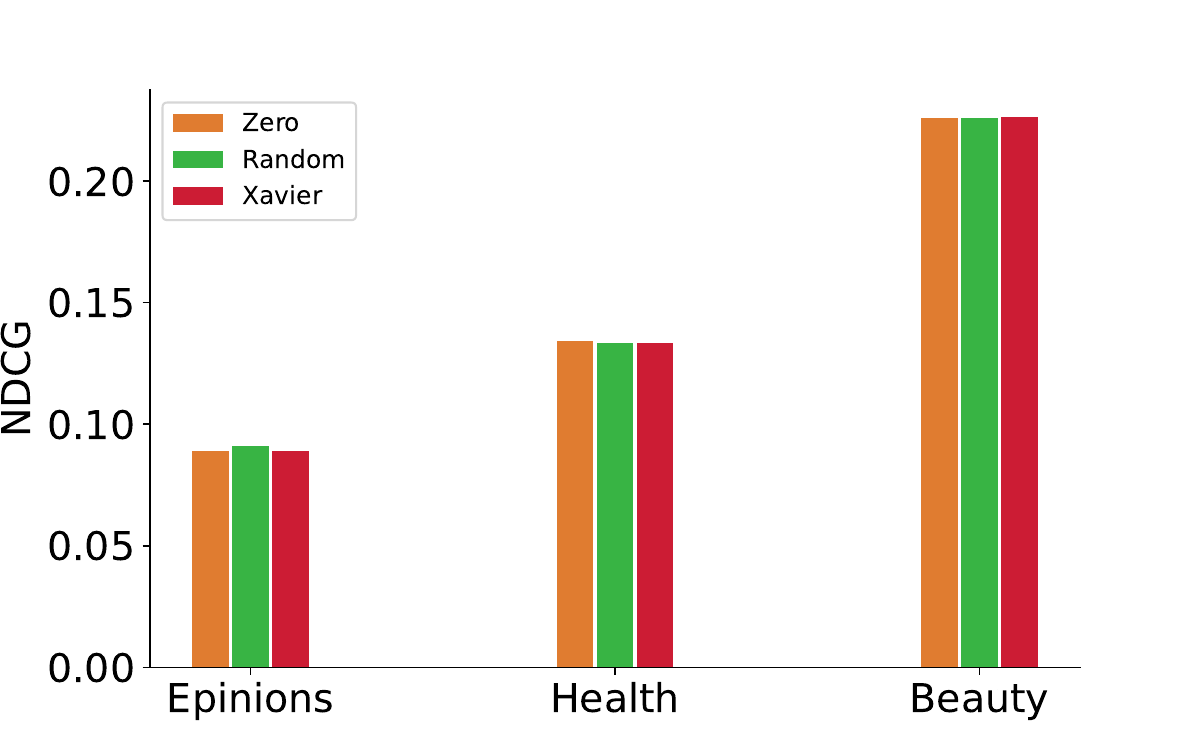}
		\caption{NDCG}
	\end{subfigure}
	\begin{subfigure}[b]{0.32\linewidth}
		\includegraphics[width=\textwidth]{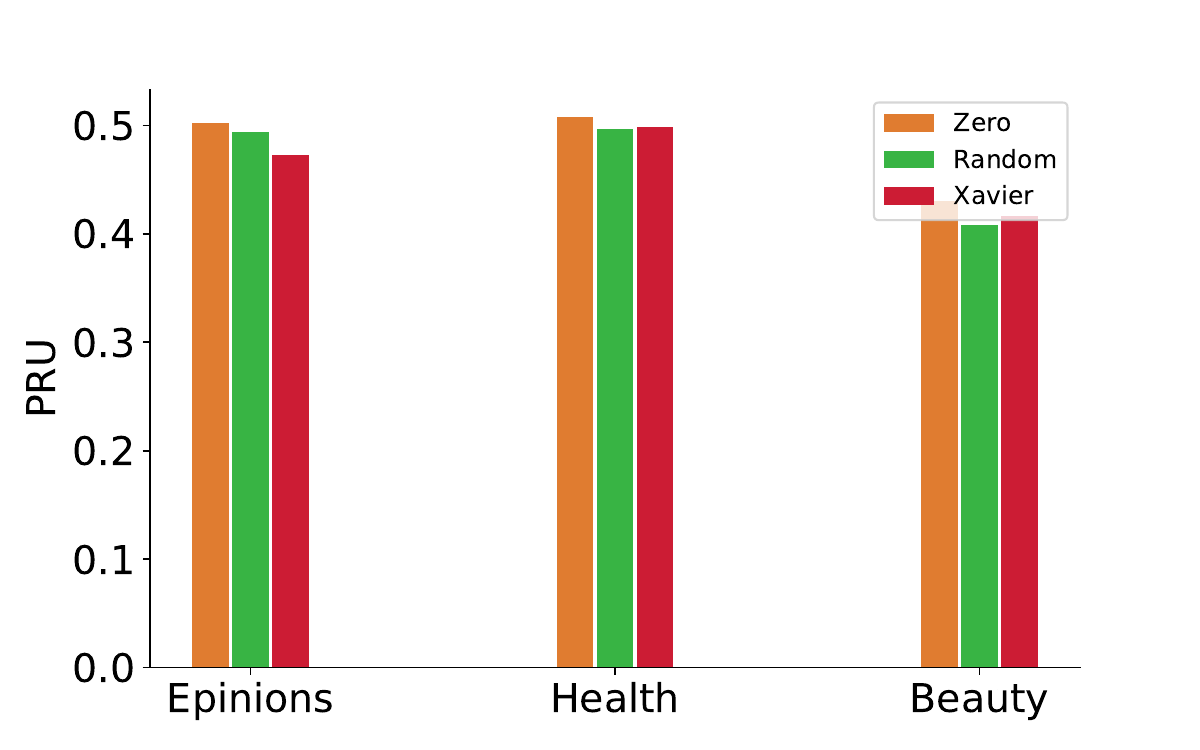}
		\caption{PRU}
	\end{subfigure}
	\begin{subfigure}[b]{0.32\linewidth}
		\includegraphics[width=\textwidth]{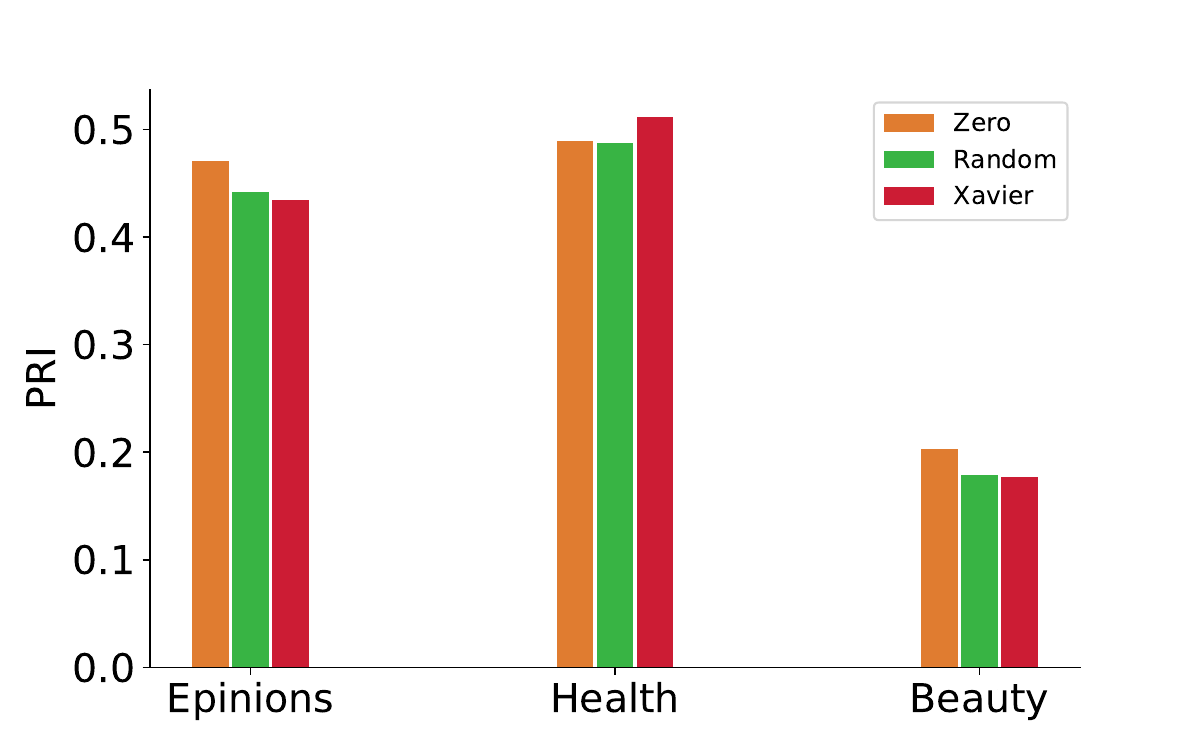}
		\caption{PRI}
	\end{subfigure}
	\caption{Performance of the model under different initialization strategies for feature-specific weights. .\label{fig:init_barplots}}
\end{figure}

\subsection{Hyper-parameters study}
\label{sec:hyperparameter_study}
In this section, we investigate the effect of different hyper-parameters on models' performance.
Specifically, we study the impact of three hyper-parameters:
the number of layers \(K\), the embeddings size \(d\)
and \(\delta\).
Due to the variety of the used datasets,
we study each of the parameters over  two datasets.

\paragraph{Number of layers}
To study the impact of the number of GNN layers, we select two datasets Beauty and Epinions, and investigate the potential effect of over-smoothing over them.
We evaluate the impact of the number of layers by ranging them from 1 to 6,
and use NDCG as the accuracy metric and PRU and PRI as the fairness metrics.
The results are presented in Figures \ref{fig:number_of_layer_epinions} and \ref{fig:number_of_layer_beauty}.
For the Epinions dataset, our model outperforms all the baselines across all the metrics.
In all the methods, NDCG increases until reaching the optimal point and then starts to decline due to the over-smoothing phenomenon. As the number of layers increases, PRU and 
PRI metrics decrease, indicating that increasing the number of layers up to a certain point can alleviate the unfairness caused by popularity bias.
The effect of over-smoothing becomes less noticeable due to the use of the weighted sum approach for the final node representation. On the other hand in the Beauty dataset, although we use the same weighted sum approach, the effect of the over-smoothing problem is evident from the initial stages of the models. The performance of the LightGCN and PC models drastically declines after 5 GNN layers.
The performance of HetroFair decreases too,
but with a smaller slope and it still outperforms all the
baselines over the Beauty dataset.
This shows its robust performance across different settings.

\begin{figure}[!h]
	\centering
	\vspace{0.5em}
	\begin{tikzpicture}[scale=1, every node/.style={scale=0.9}]
		\draw[APDA, thick] (0,0) -- (0.5,0) node[right]{\small APDA};
		\draw[HetroFair, thick] (2,0) -- (2.5,0) node[right]{\small Hetro-Fair};
		\draw[LightGCN, thick] (4.5,0) -- (5.0,0) node[right]{\small LightGCN};
		\draw[PC, thick] (7,0) -- (7.5,0) node[right]{\small PC};
		\draw[PopGo, thick] (8.5,0) -- (9.0,0) node[right]{\small PopGo};
		\draw[Reg, thick] (10.5,0) -- (11.0,0) node[right]{\small Reg};
		\draw[rAdjNorm, thick] (12,0) -- (12.5,0) node[right]{\small r-AdjNorm};
	\end{tikzpicture}
	
	\begin{subfigure}[b]{0.32\linewidth}
		\includegraphics[width=\textwidth]{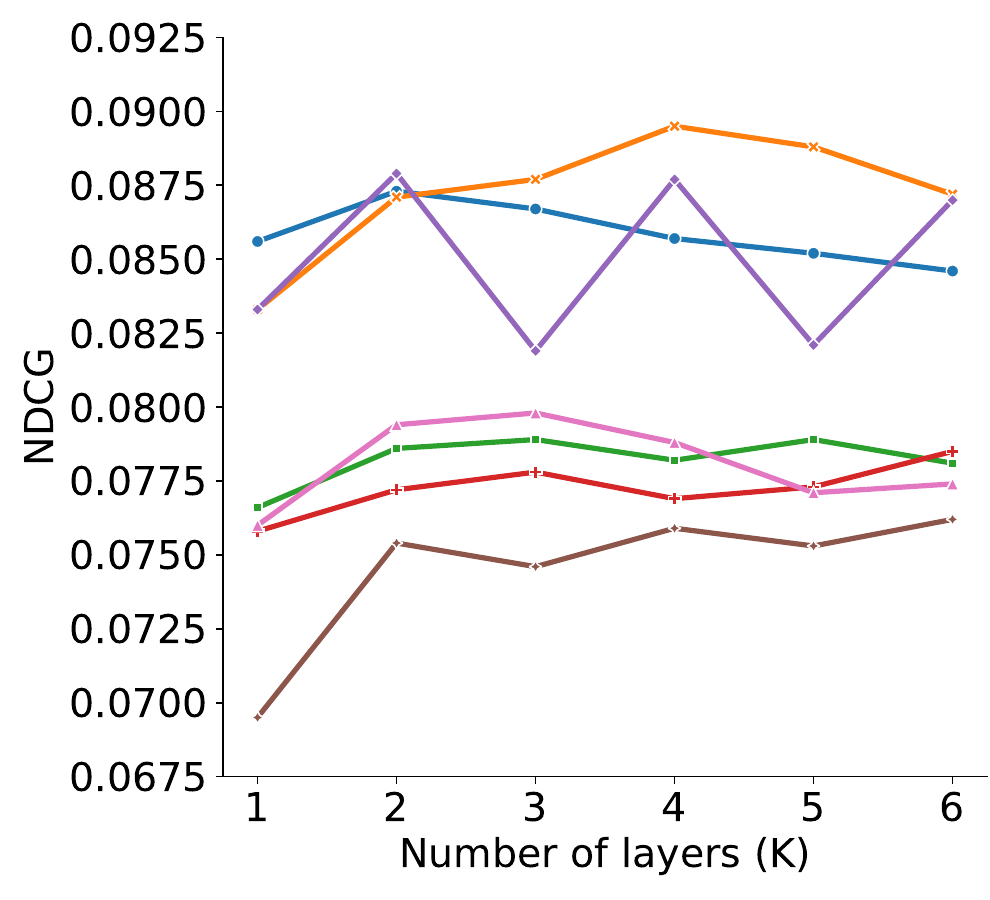}
		\caption{NDCG}
	\end{subfigure}
	\begin{subfigure}[b]{0.32\linewidth}
		\includegraphics[width=\textwidth]{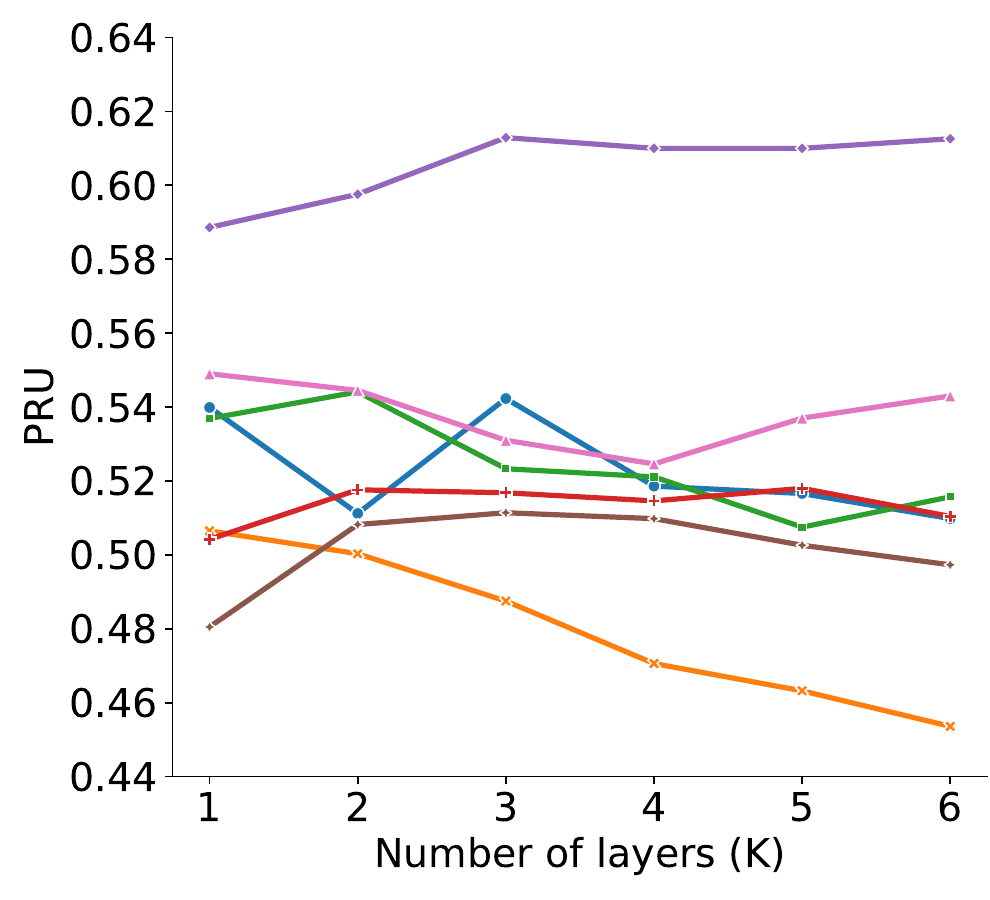}
		\caption{PRU}
	\end{subfigure}
	\begin{subfigure}[b]{0.32\linewidth}
		\includegraphics[width=\textwidth]{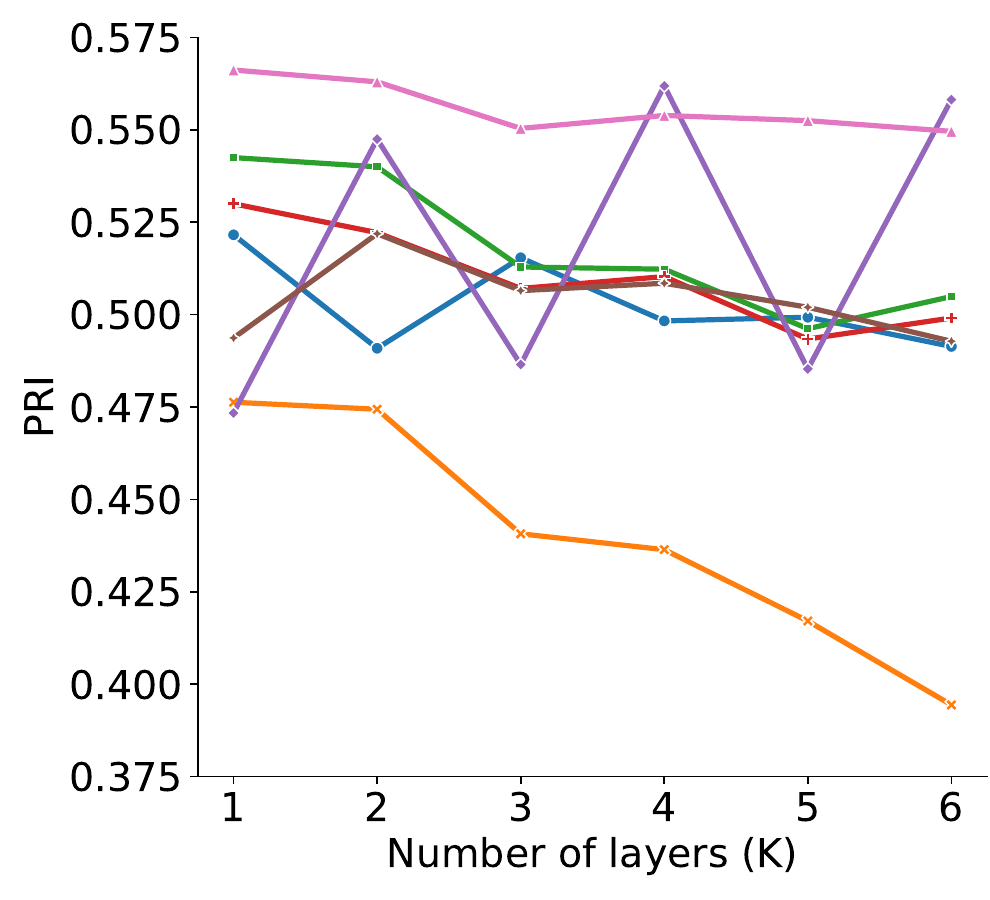}
		\caption{PRI}
	\end{subfigure}
	\caption{The performance of different models, over the Epinions dataset, for various numbers of message passing layers.\label{fig:number_of_layer_epinions}}
\end{figure}

\begin{figure}[!t]
	\centering
	\vspace{0.5em}
	\begin{tikzpicture}[scale=1, every node/.style={scale=0.9}]
		\draw[APDA, thick] (0,0) -- (0.5,0) node[right]{\small APDA};
		\draw[HetroFair, thick] (2,0) -- (2.5,0) node[right]{\small Hetro-Fair};
		\draw[LightGCN, thick] (4.5,0) -- (5.0,0) node[right]{\small LightGCN};
		\draw[PC, thick] (7,0) -- (7.5,0) node[right]{\small PC};
		\draw[PopGo, thick] (8.5,0) -- (9.0,0) node[right]{\small PopGo};
		\draw[Reg, thick] (10.5,0) -- (11.0,0) node[right]{\small Reg};
		\draw[rAdjNorm, thick] (12,0) -- (12.5,0) node[right]{\small r-AdjNorm};
	\end{tikzpicture}
	
	\begin{subfigure}[b]{0.32\linewidth}
		\includegraphics[width=\textwidth]{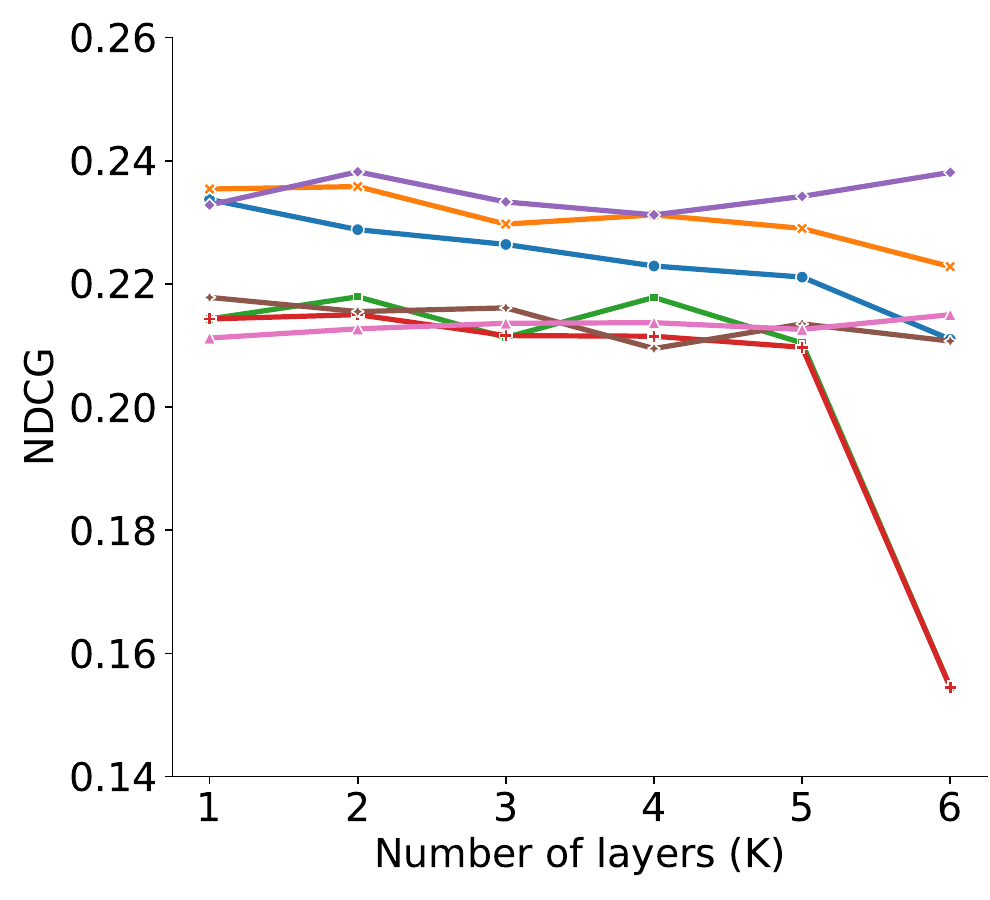}
		\caption{NDCG}
	\end{subfigure}
	\begin{subfigure}[b]{0.32\linewidth}
		\includegraphics[width=\textwidth]{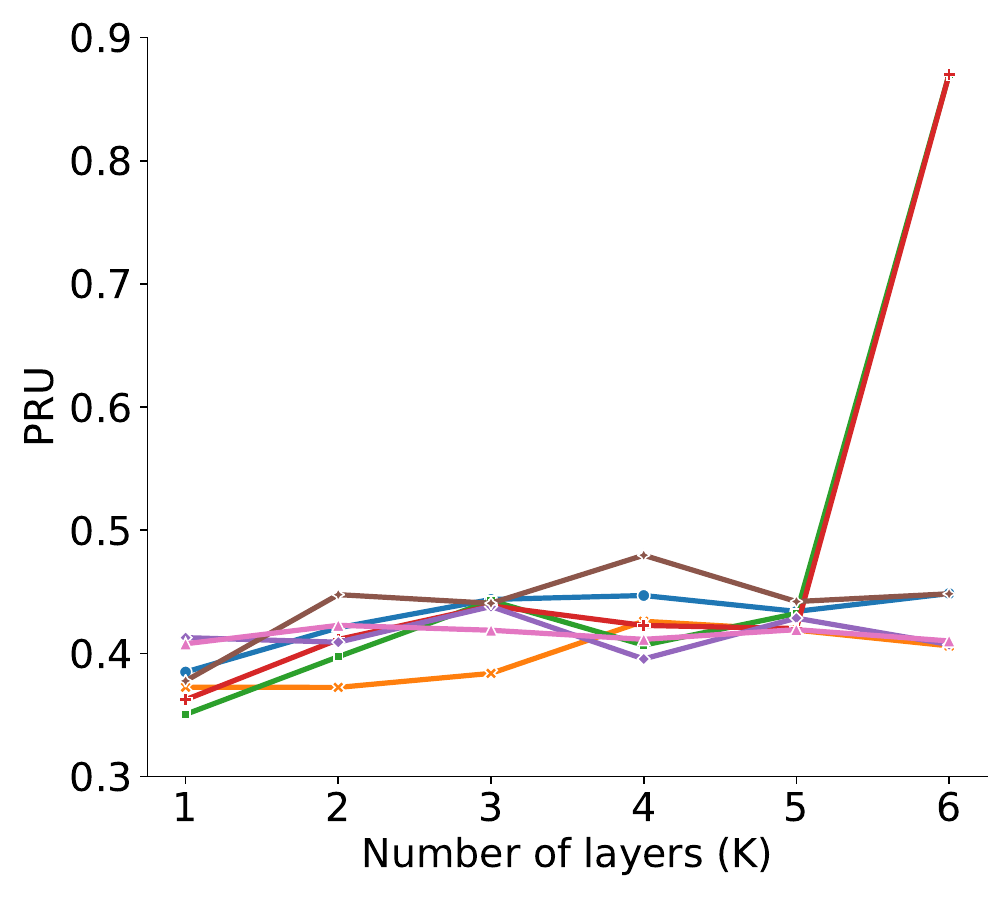}
		\caption{PRU}
	\end{subfigure}
	\begin{subfigure}[b]{0.32\linewidth}
		\includegraphics[width=\textwidth]{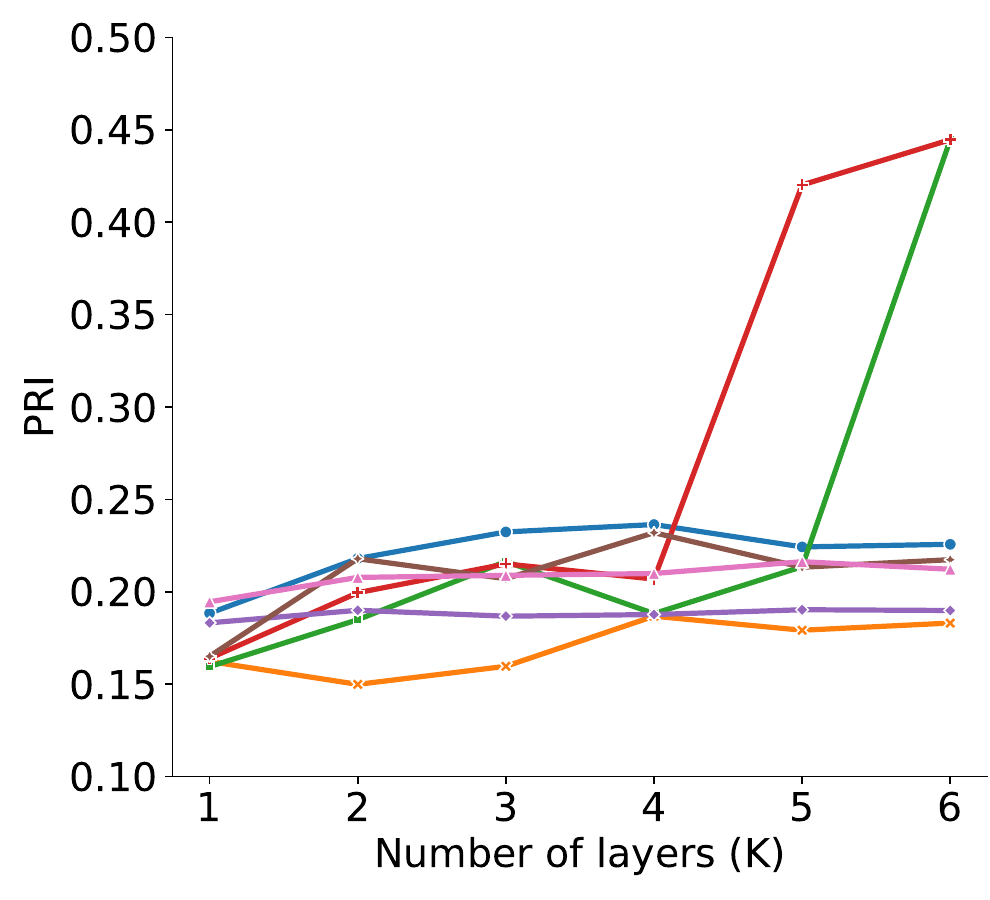}
		\caption{PRI}
	\end{subfigure}
	\caption{The performance of different models, over the Beauty dataset, for various numbers of message passing layers.\label{fig:number_of_layer_beauty}}
\end{figure}

\begin{figure}[H]
	\centering
	\vspace{0.5em}
	\begin{tikzpicture}[scale=1, every node/.style={scale=0.9}]
		\draw[APDA, thick] (0,0) -- (0.5,0) node[right]{\small APDA};
		\draw[HetroFair, thick] (2,0) -- (2.5,0) node[right]{\small Hetro-Fair};
		\draw[LightGCN, thick] (4.5,0) -- (5.0,0) node[right]{\small LightGCN};
		\draw[PC, thick] (7,0) -- (7.5,0) node[right]{\small PC};
		\draw[PopGo, thick] (8.5,0) -- (9.0,0) node[right]{\small PopGo};
		\draw[Reg, thick] (10.5,0) -- (11.0,0) node[right]{\small Reg};
		\draw[rAdjNorm, thick] (12,0) -- (12.5,0) node[right]{\small r-AdjNorm};
	\end{tikzpicture}
	
	\begin{subfigure}[b]{0.32\linewidth}
		\includegraphics[width=\textwidth]{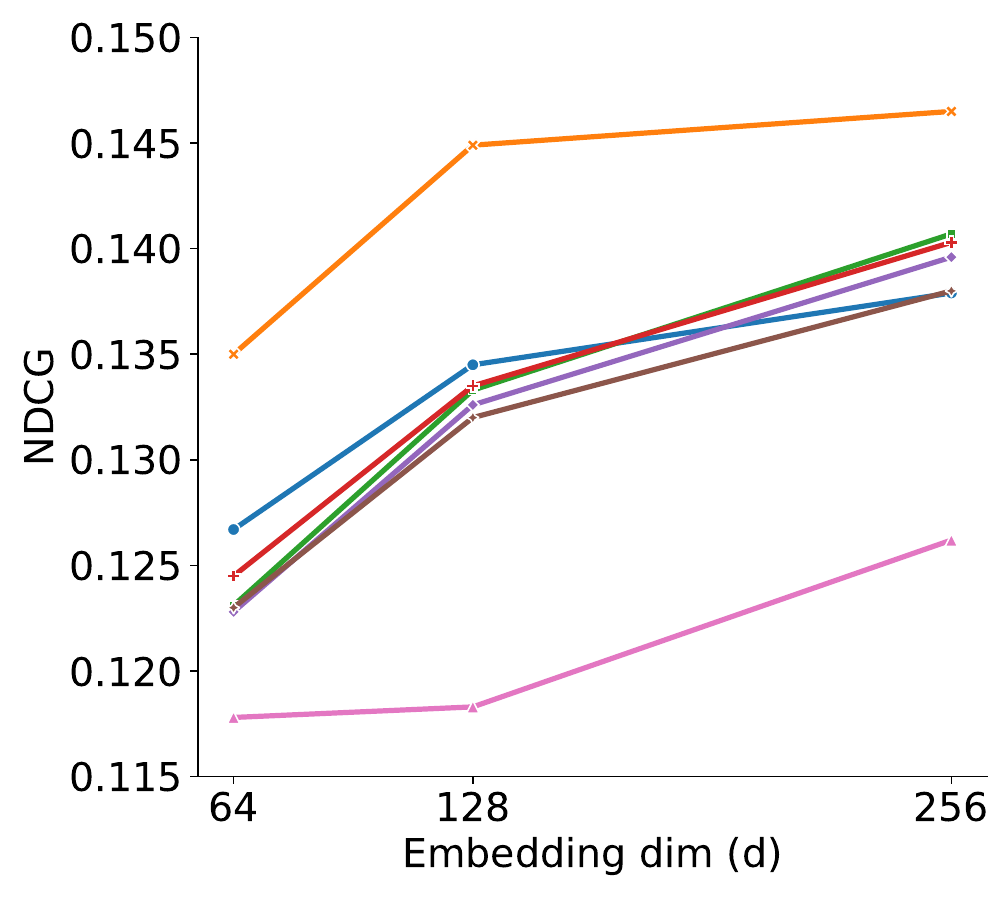}
		\caption{NDCG}
	\end{subfigure}
	\begin{subfigure}[b]{0.32\linewidth}
		\includegraphics[width=\textwidth]{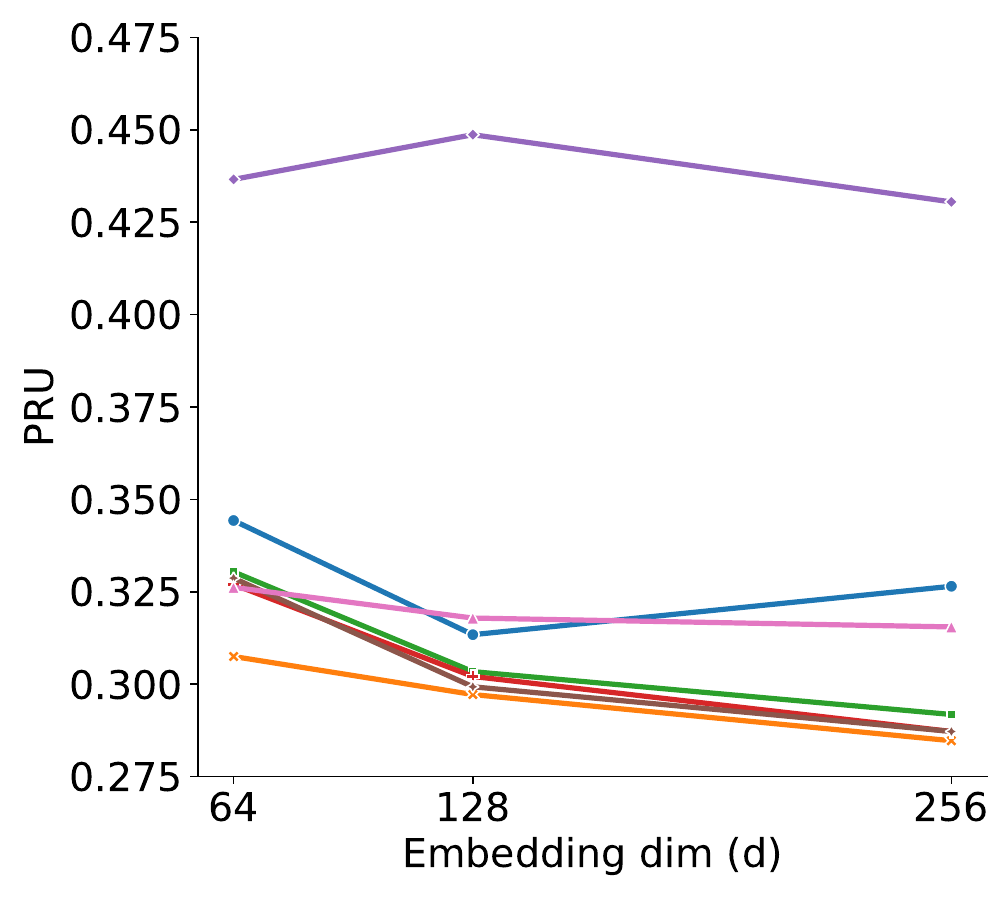}
		\caption{PRU}
	\end{subfigure}
	\begin{subfigure}[b]{0.32\linewidth}
		\includegraphics[width=\textwidth]{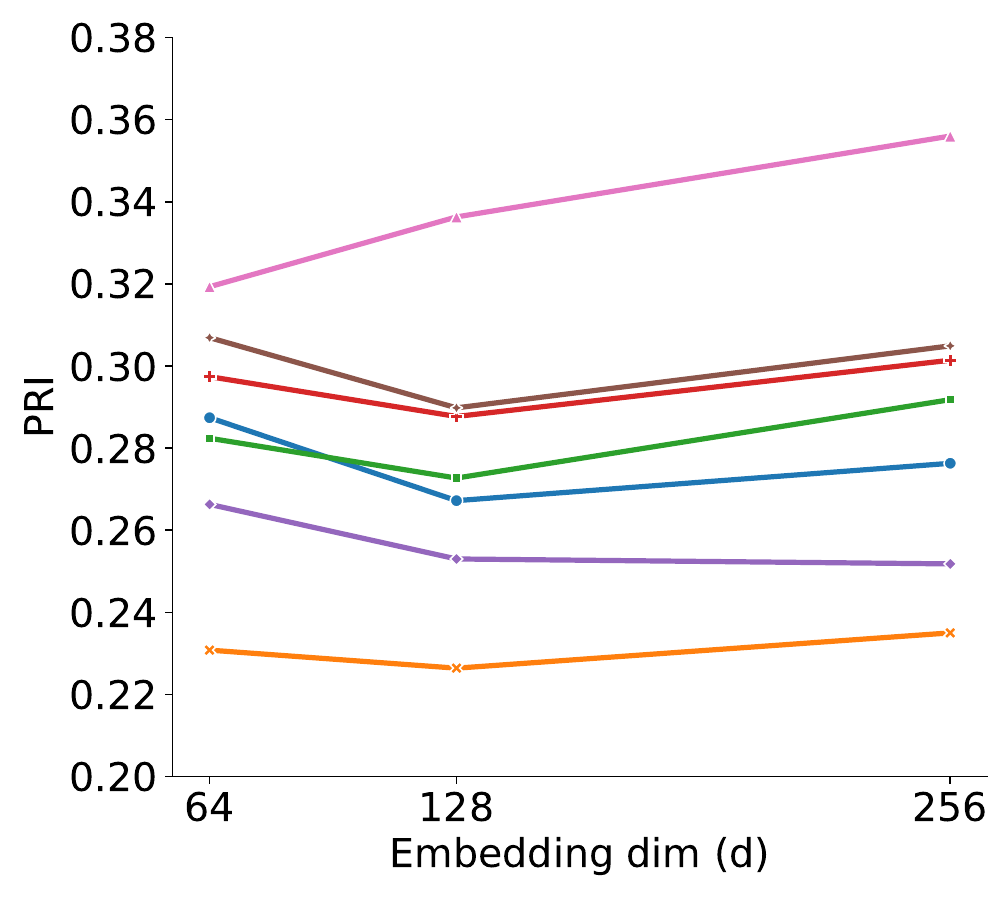}
		\caption{PRI}
	\end{subfigure}
	\caption{Comparing the models over the CDs dataset, for different values of embeddings dimension.\label{fig:embed_dim_cds}}
\end{figure}

\begin{figure}[H]
	\centering
	\vspace{0.5em}
	\begin{tikzpicture}[scale=1, every node/.style={scale=0.9}]
		\draw[APDA, thick] (0,0) -- (0.5,0) node[right]{\small APDA};
		\draw[HetroFair, thick] (2,0) -- (2.5,0) node[right]{\small Hetro-Fair};
		\draw[LightGCN, thick] (4.5,0) -- (5.0,0) node[right]{\small LightGCN};
		\draw[PC, thick] (7,0) -- (7.5,0) node[right]{\small PC};
		\draw[PopGo, thick] (8.5,0) -- (9.0,0) node[right]{\small PopGo};
		\draw[Reg, thick] (10.5,0) -- (11.0,0) node[right]{\small Reg};
		\draw[rAdjNorm, thick] (12,0) -- (12.5,0) node[right]{\small r-AdjNorm};
	\end{tikzpicture}
	
	\begin{subfigure}[b]{0.32\linewidth}
		\includegraphics[width=\textwidth]{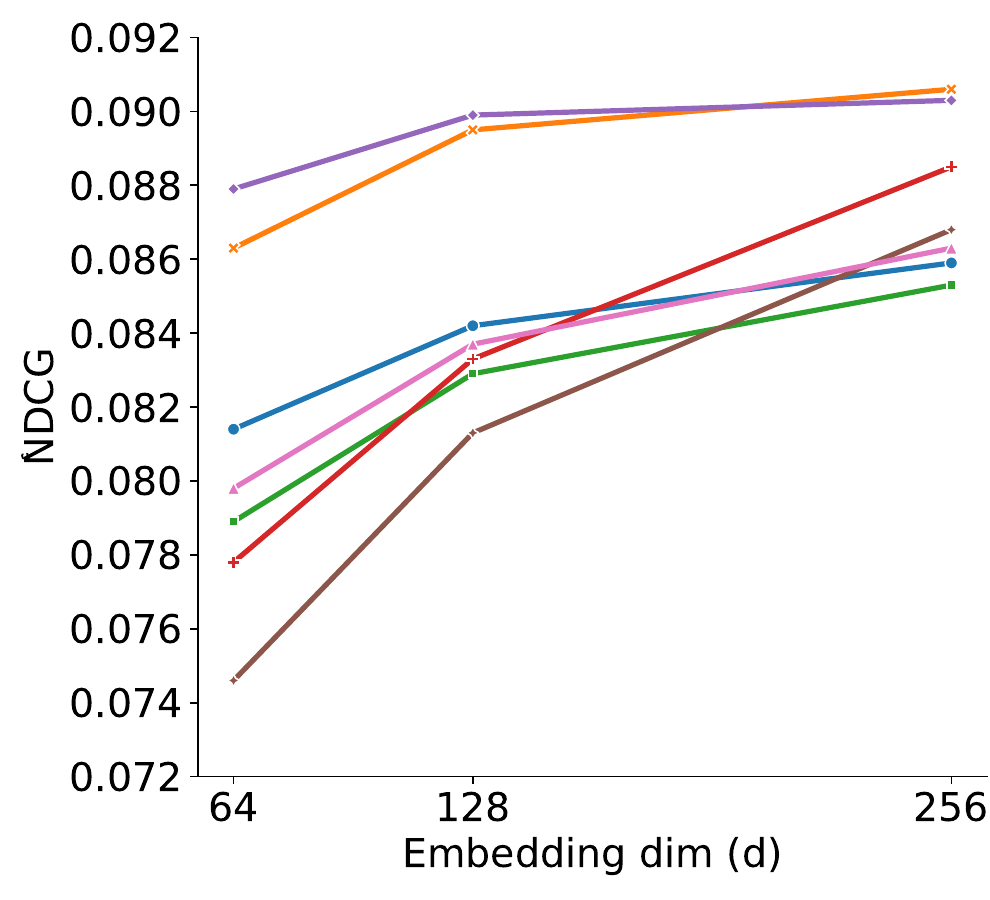}
		\caption{NDCG}
	\end{subfigure}
	\begin{subfigure}[b]{0.32\linewidth}
		\includegraphics[width=\textwidth]{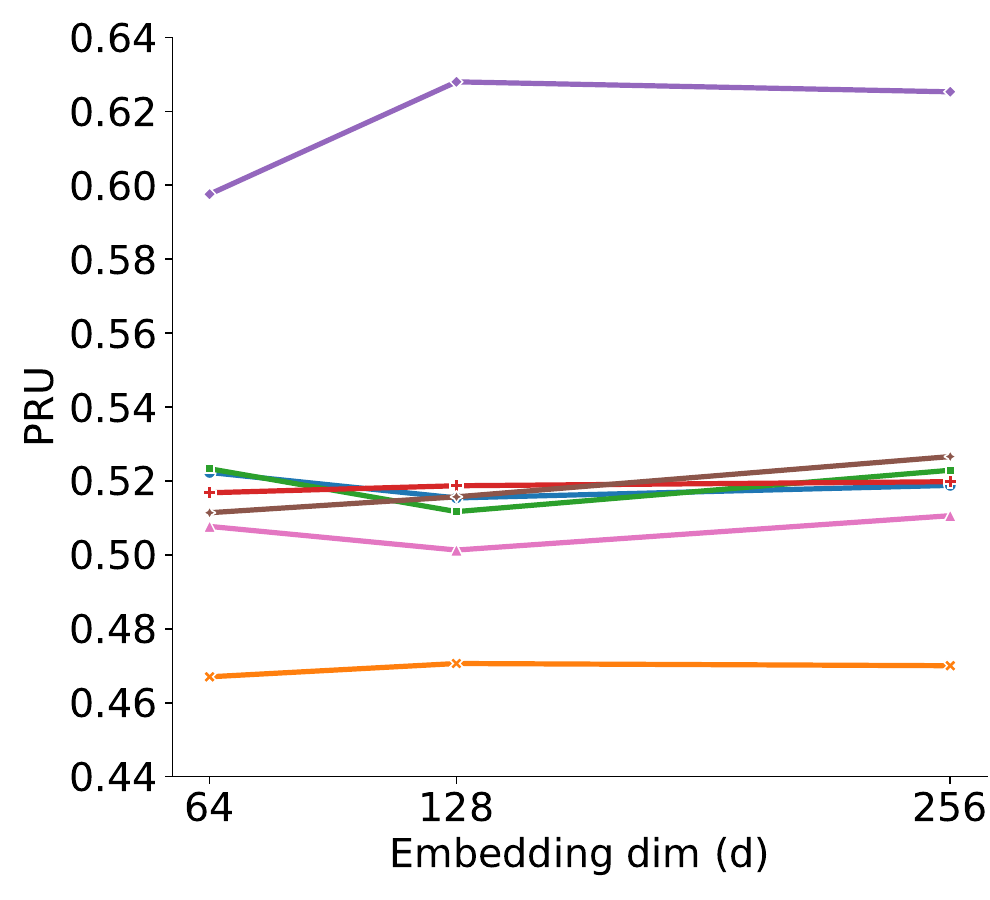}
		\caption{PRU}
	\end{subfigure}
	\begin{subfigure}[b]{0.32\linewidth}
		\includegraphics[width=\textwidth]{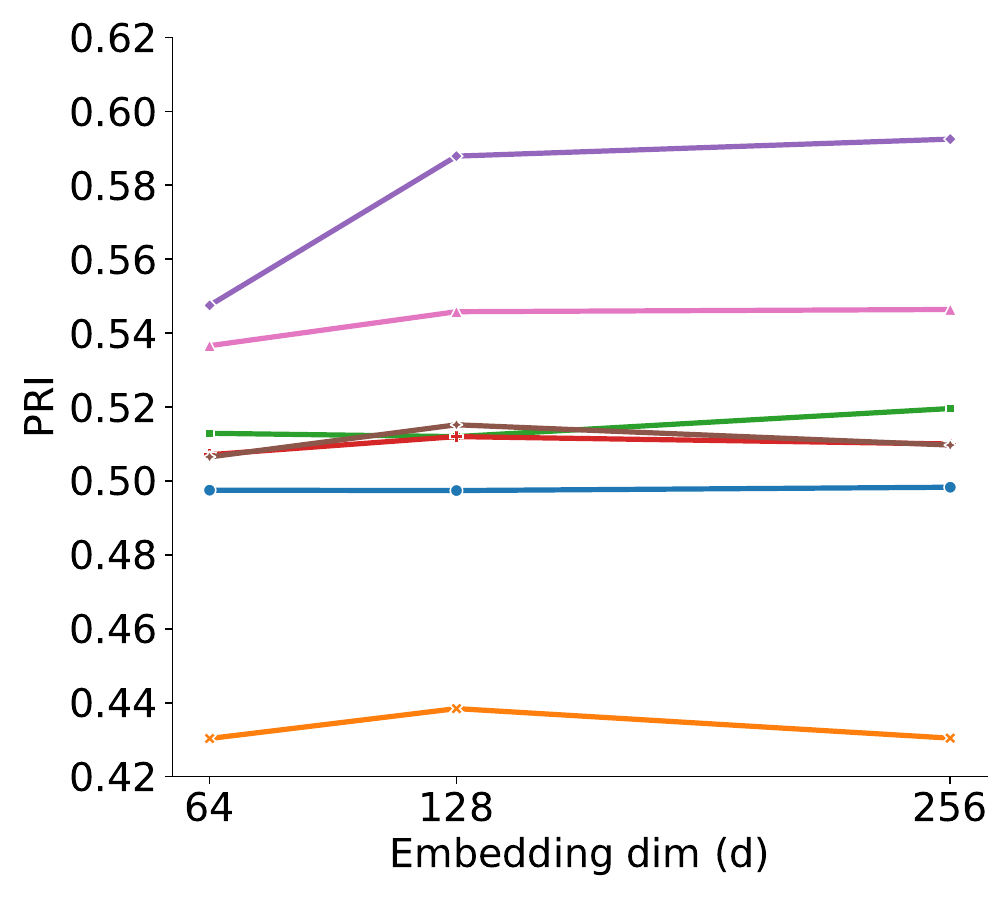}
		\caption{PRI}
	\end{subfigure}
	\caption{Comparing the models over the Epinions dataset, for different values of embeddings dimension.\label{fig:embed_dim_epinion}}
\end{figure}

\paragraph{Embeddings dimension}
The embeddings dimension is an important factor in  graph-based recommendation systems, representing the latent factors size of users and items.
We consider two datasets CDs and Epinions,
and compare the performance of the models across three embeddings sizes: 
$64$ (similar to LightGCN and r-AdjNorm),
$128$ (the default value for HetroFair)
and $256$ (the used value by APDA).
We utilize NDCG as the accuracy metric and PRU and PRI as the fairness metrics. 
The results are presented in Figures \ref{fig:embed_dim_cds} and \ref{fig:embed_dim_epinion}.
As can be seen in the figures,
over both datasets and by increasing the embeddings dimension,
NDCG of HetroFair increases, too.
In other words, in our model NDCG has positive correlation with the embeddings dimension.
For all values of the embeddings dimension,
HetroFair outperforms
the other methods,
in terms of both accuracy and fairness measures.
It can also be seen in the figures that the
dependence of PRI on the dimension size is negligible.
Considering the NDCG metric in Figures~\ref{fig:embed_dim_cds} and~\ref{fig:embed_dim_epinion},
the plots exhibit steeper slopes when the dimensions change from $64$ to $128$,
compared to the cases where the dimensions change from $128$ to $264$.
This means that the rates of accuracy changes decrease from 128 dimensions and beyond. According to our time complexity analysis in Section~\ref{sec:ourmethod},
time complexity of HetroFair 
increases linearly with the dimension of the embeddings.
Therefore, considering $128$ as the default dimension for embeddings seems to be a good trade-off between accuracy and complexity.

In the investigation of the baseline models,
it can be pointed out that no model consistently outperforms the other baselines
and by revising the parameters,
the relative performances of the models change with respect to each other.
For example in Figure \ref{fig:embed_dim_cds},
we see that in most cases, LightGCN achieves better fairness scores than APDA,
suggesting that the simple LightGCN method effectively captures data characteristics, particularly when a larger embeddings dimension is used.
On the other hand, r-AdjNorm has the worst performance in fairness metrics,
indicating that it fails to make a trade-off between accuracy and fairness.

\paragraph{\(Parameter \delta\)}
Parameter \(\delta\) plays a crucial role in the accuracy-fairness trade-off.
We consider the datasets Health and Beauty,
and change the value of \(\delta\)
from $0.05$ to $0.95$ with $0.05$ step.
The results are depicted in Figures \ref{fig:delta2} and \ref{fig:delta1}.
Over both datasets, we observe almost the same pattern:
within first few steps,
the fairness metrics are in their lowest values,
while the accuracy metrics are also too low.
As the value of \(\delta\) increases,
the performance of HetroFair increases,
while PRU and RRI deteriorate until reaching to an optimal value for \(\delta\),
which leads to almost stable values for all the metrics. 

\begin{figure}[H]
	\begin{subfigure}[b]{0.49\linewidth}
		\includegraphics[width=\linewidth]{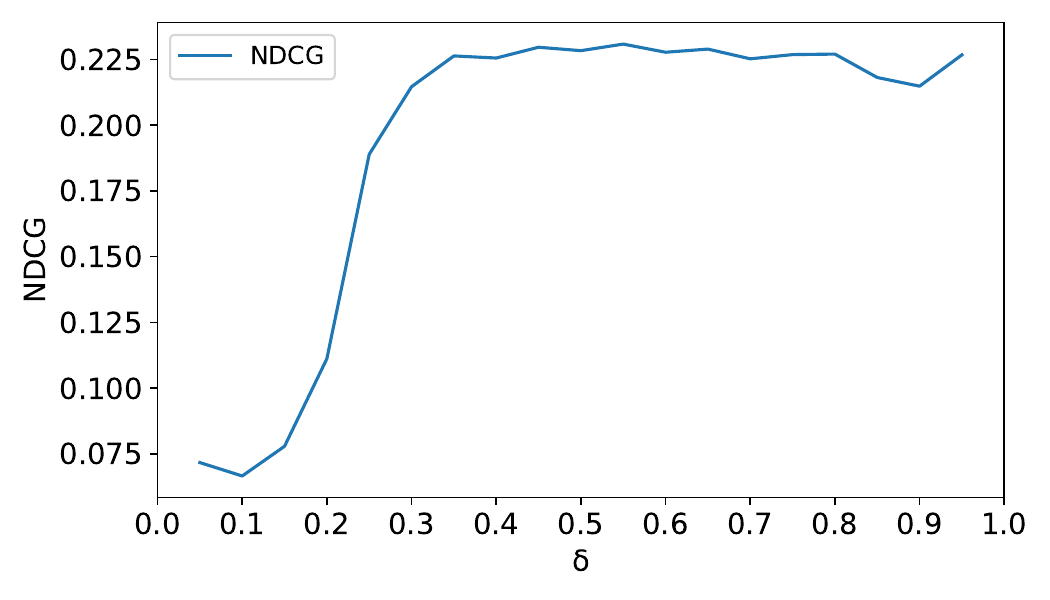}
		\caption{}
	\end{subfigure}
	\begin{subfigure}[b]{0.49\linewidth}
		\includegraphics[width=\linewidth]{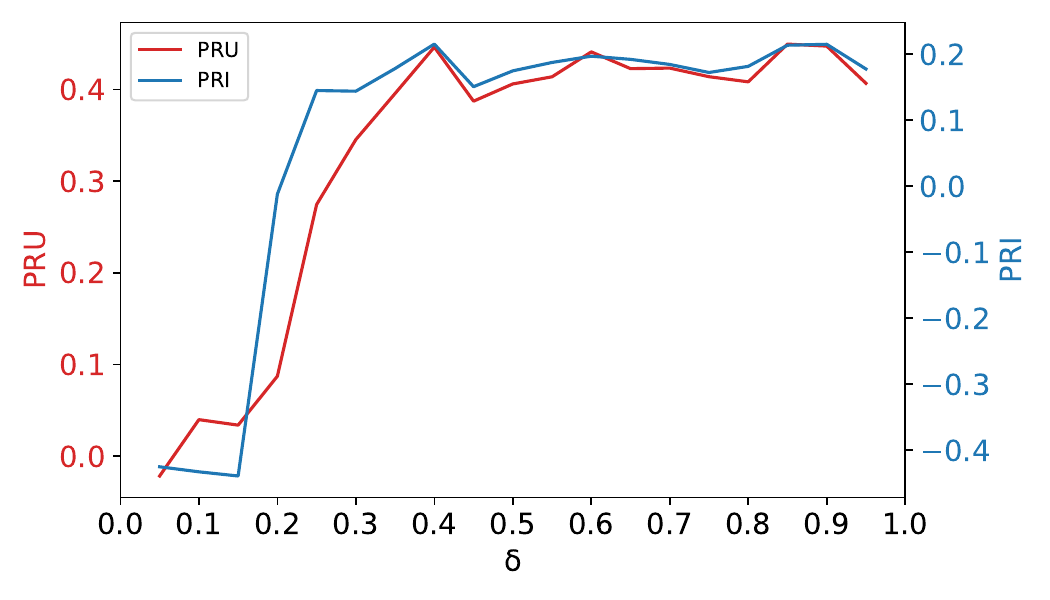}
		\caption{}
	\end{subfigure}
	\caption{Studying the effect of $\delta$, over the Beauty dataset.\label{fig:delta2}}
\end{figure}

\begin{figure}[H]
	\begin{subfigure}[b]{0.49\linewidth}
		\includegraphics[width=\linewidth]{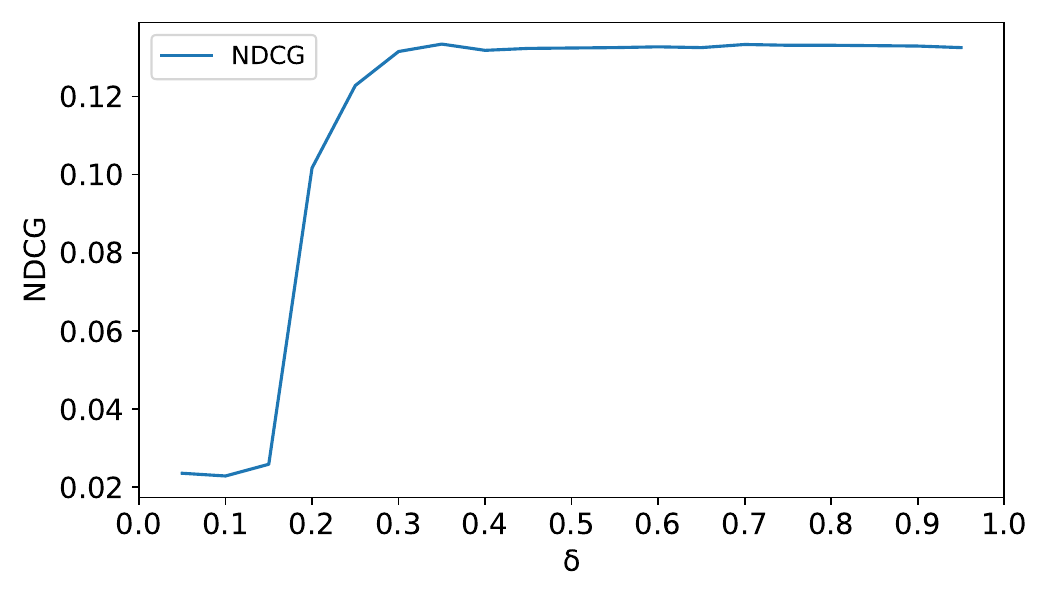}
		\caption{}
	\end{subfigure}
	\begin{subfigure}[b]{0.49\linewidth}
		\includegraphics[width=\linewidth]{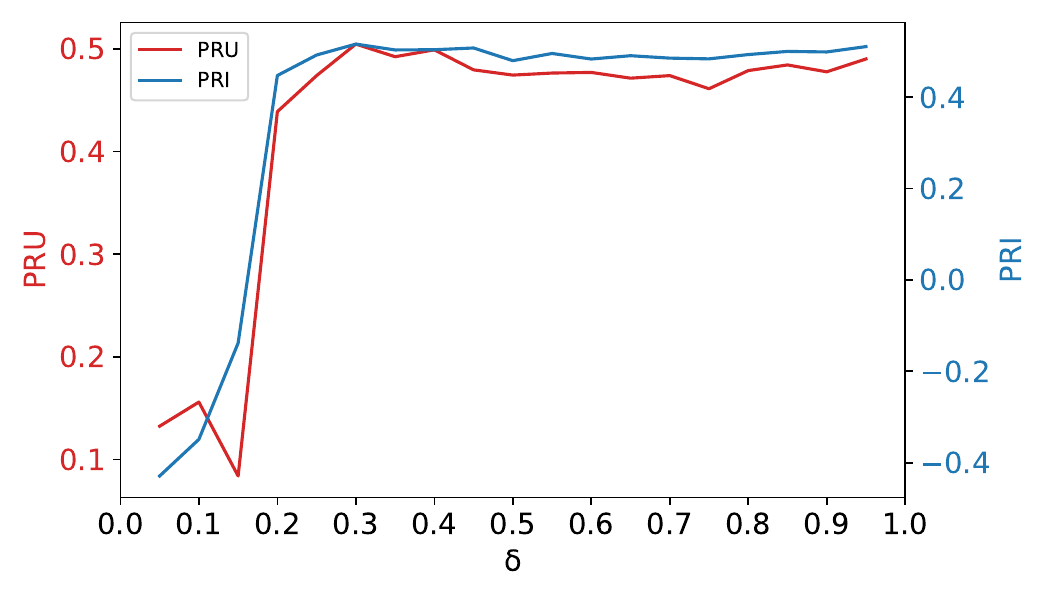}
		\caption{}
	\end{subfigure}
	\caption{Studying the effect of $\delta$, over the Health dataset.\label{fig:delta1}}
\end{figure}

\begin{figure*}[!t]
	\centering
	\begin{subfigure}[t]{0.2\linewidth}
		\includegraphics[width=\linewidth]{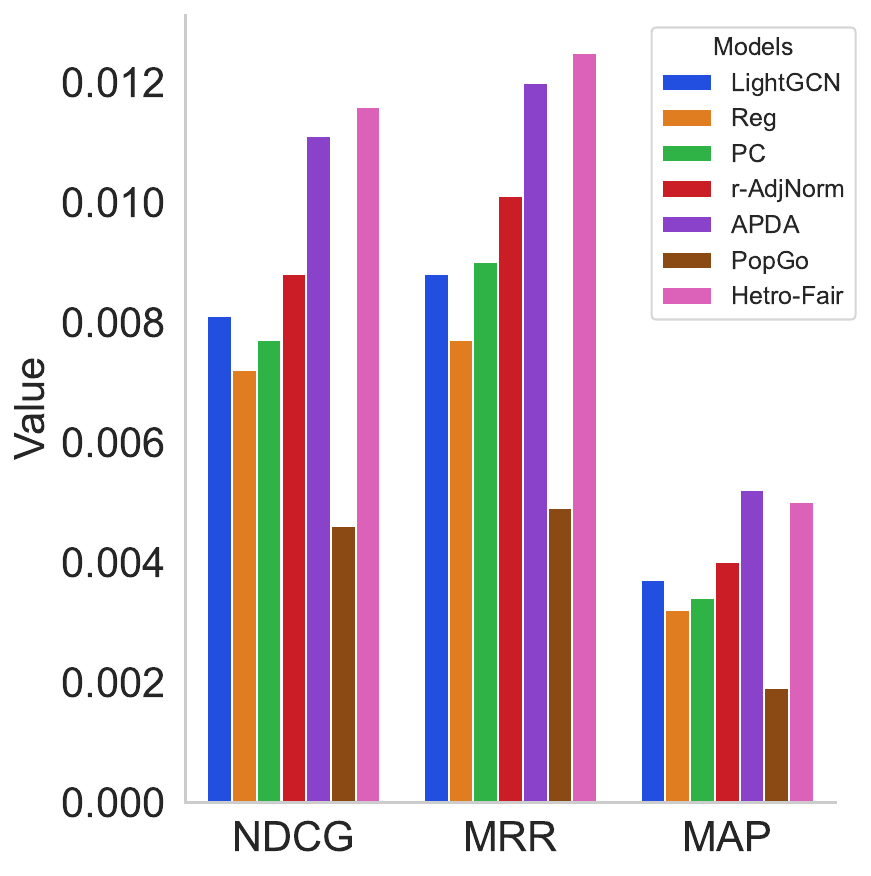}
		\caption{Long-tail items performance on the Epinions dataset.}
	\end{subfigure}
	\begin{subfigure}[t]{0.2\linewidth}
		\includegraphics[width=\linewidth]{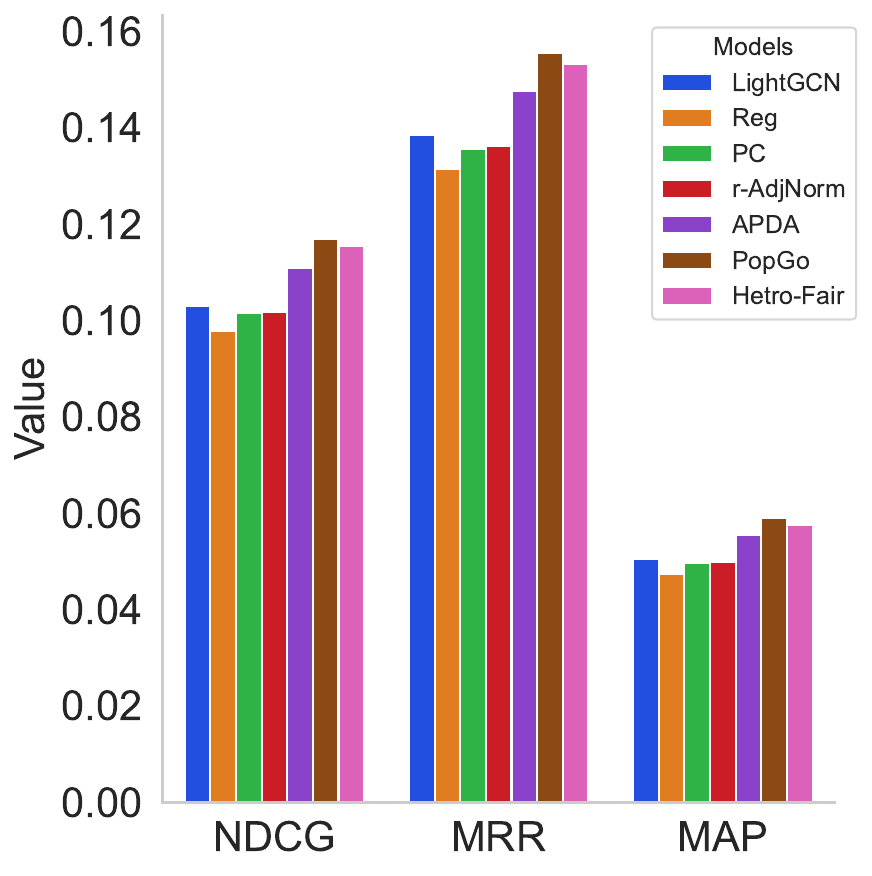}
		\caption{Short-head items performance on the Epinions dataset.}
	\end{subfigure}
	\begin{subfigure}[t]{0.2\linewidth}
		\includegraphics[width=\linewidth]{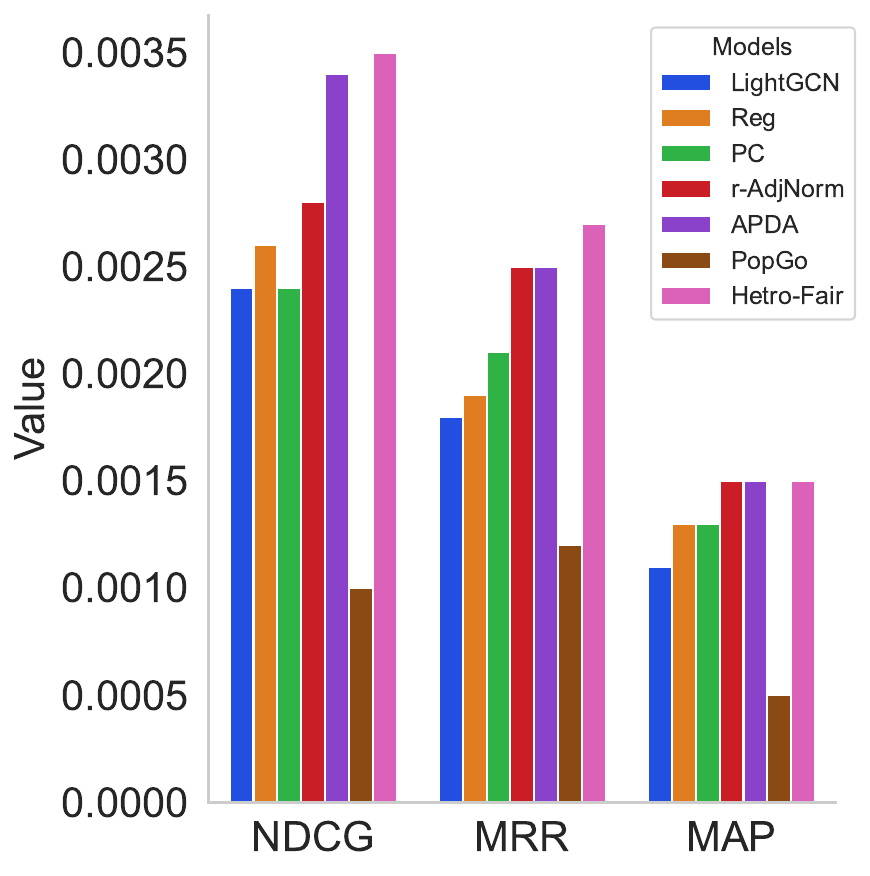}
		\caption{Long-tail items performance on the Electronics dataset.}
	\end{subfigure}
	\begin{subfigure}[t]{0.2\linewidth}
		\includegraphics[width=\linewidth]{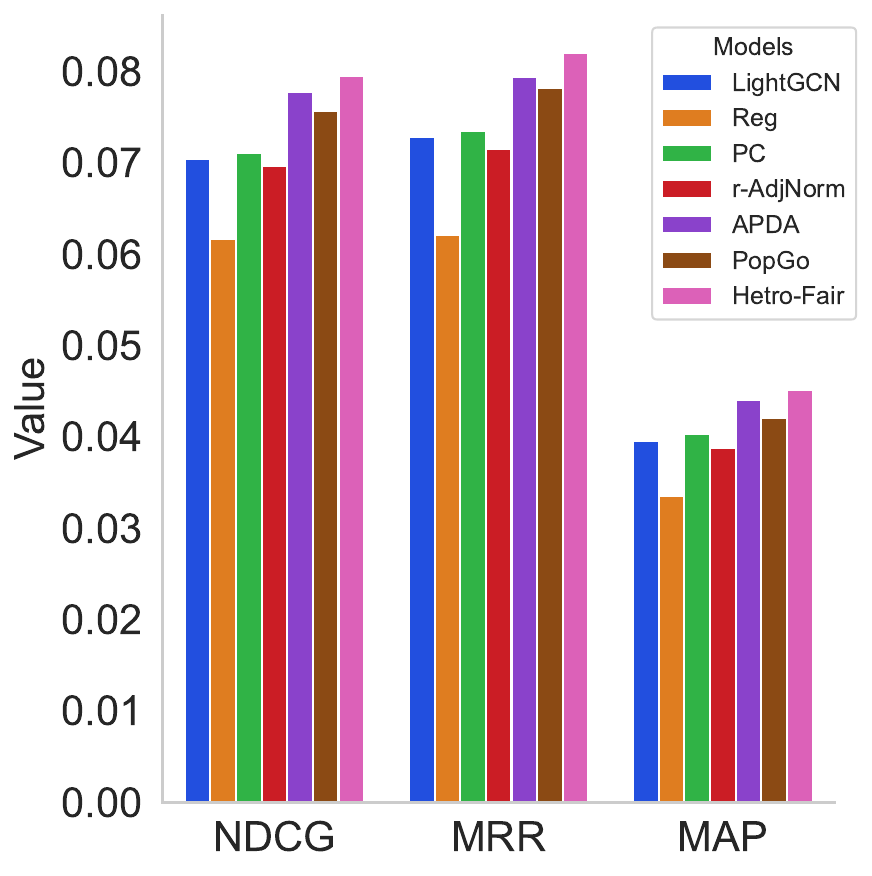}
		\caption{Short-head items performance on the Electronics dataset.}
	\end{subfigure}

	\begin{subfigure}[t]{0.2\linewidth}
		\includegraphics[width=\linewidth]{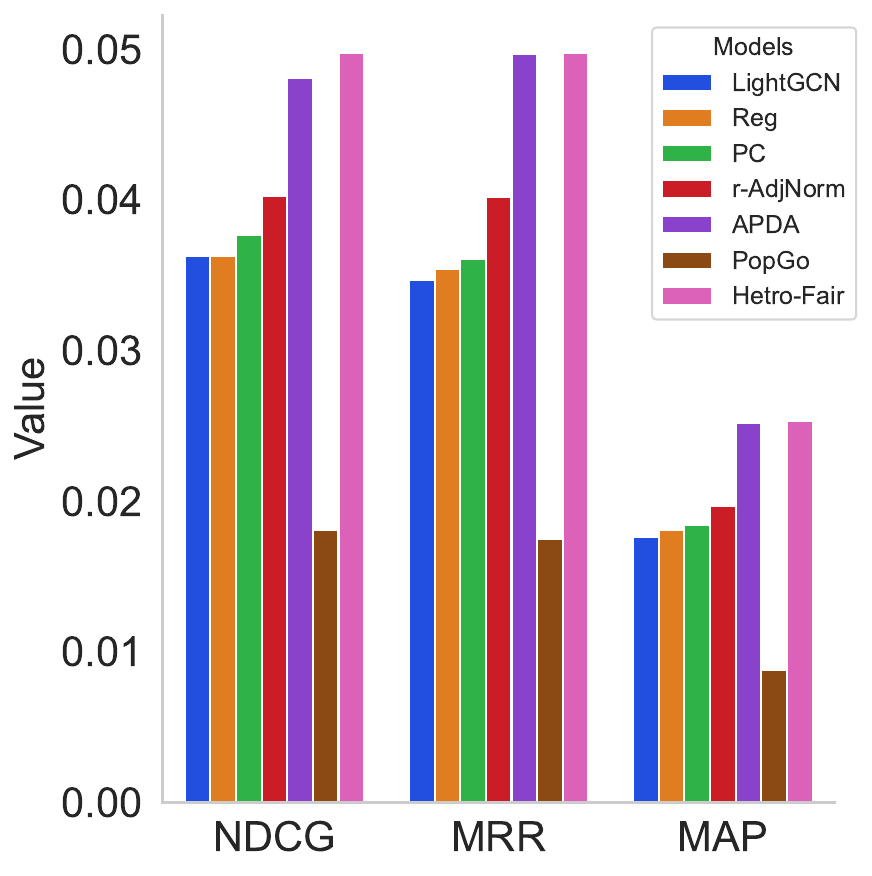}
		\caption{Long-tail items performance on the CDs dataset.}
	\end{subfigure}
	\begin{subfigure}[t]{0.2\linewidth}
		\includegraphics[width=\linewidth]{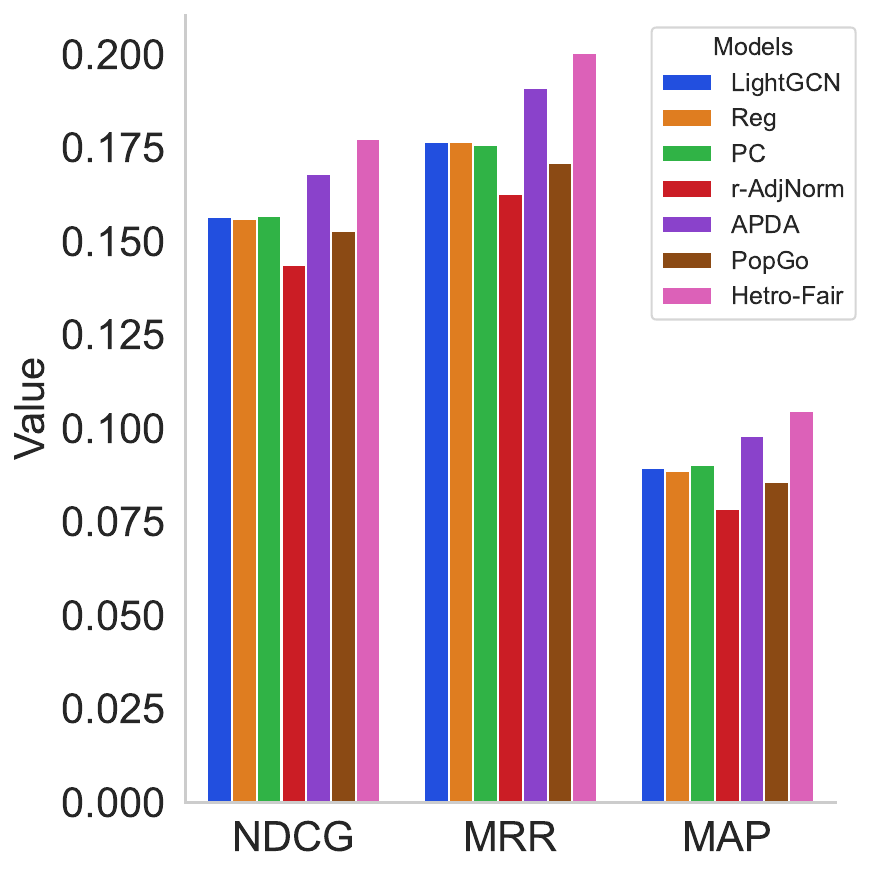}
		\caption{Short-head items performance on the CDs dataset.}
	\end{subfigure}
	\centering
	\begin{subfigure}[t]{0.2\linewidth}
		\includegraphics[width=\linewidth]{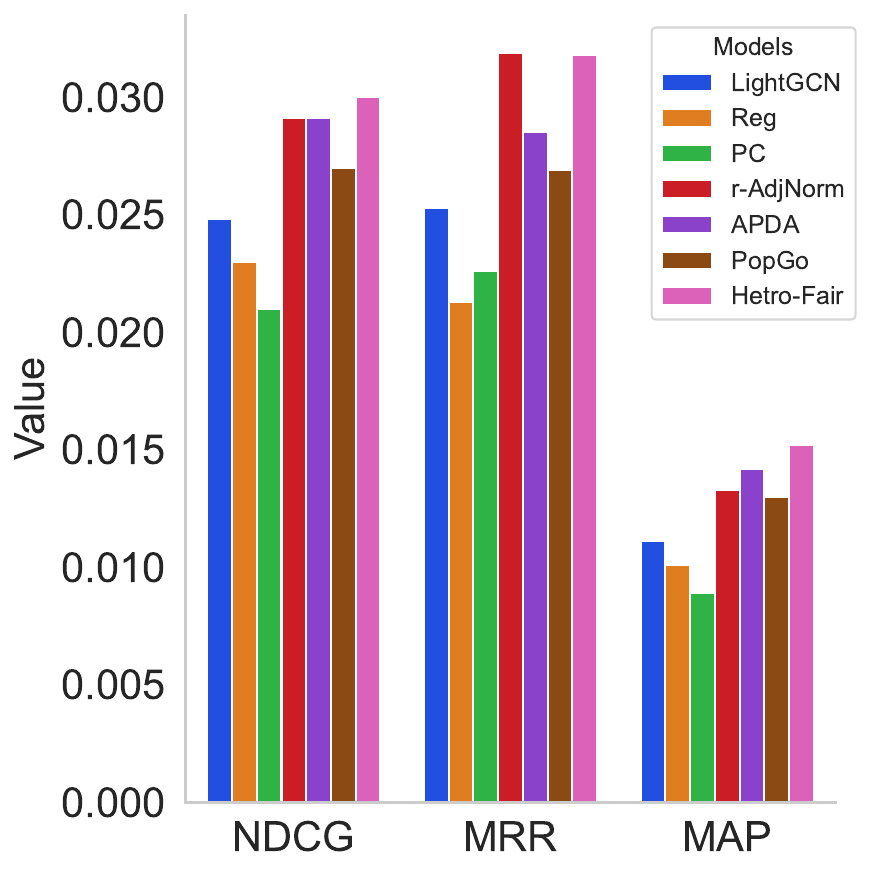}
		\caption{Long-tail items performance on the Health dataset.}
	\end{subfigure}
	\begin{subfigure}[t]{0.2\linewidth}
		\includegraphics[width=\linewidth]{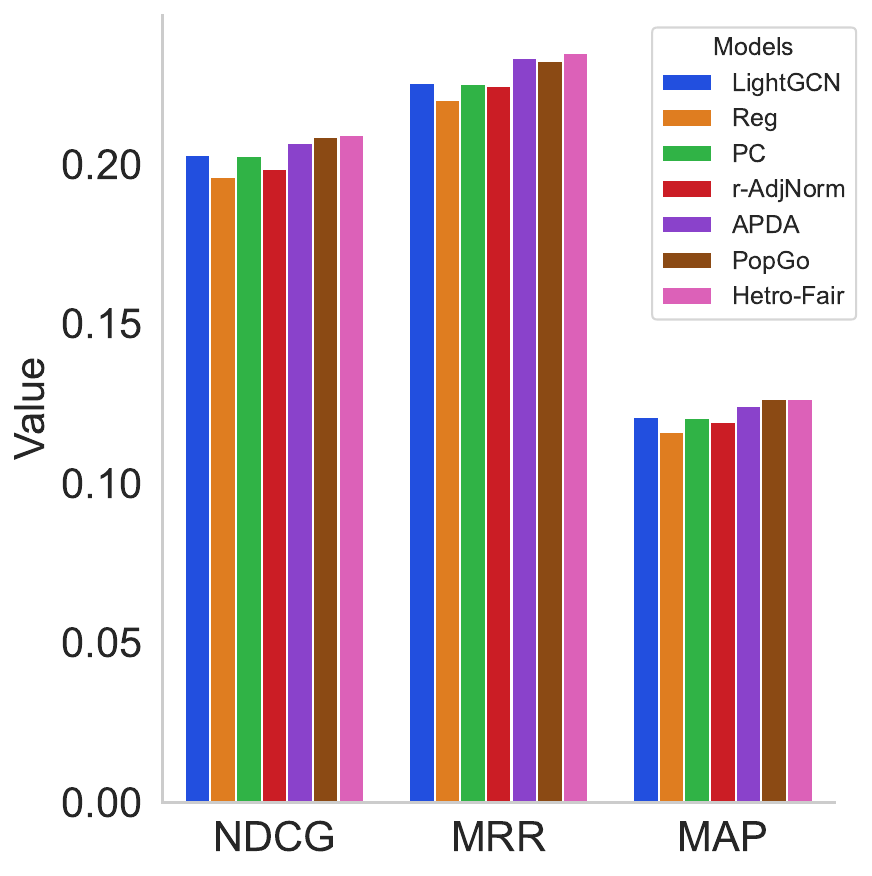}
		\caption{Short-head items performance on the Health dataset.}
	\end{subfigure}

	\begin{subfigure}[t]{0.2\linewidth}
		\includegraphics[width=\linewidth]{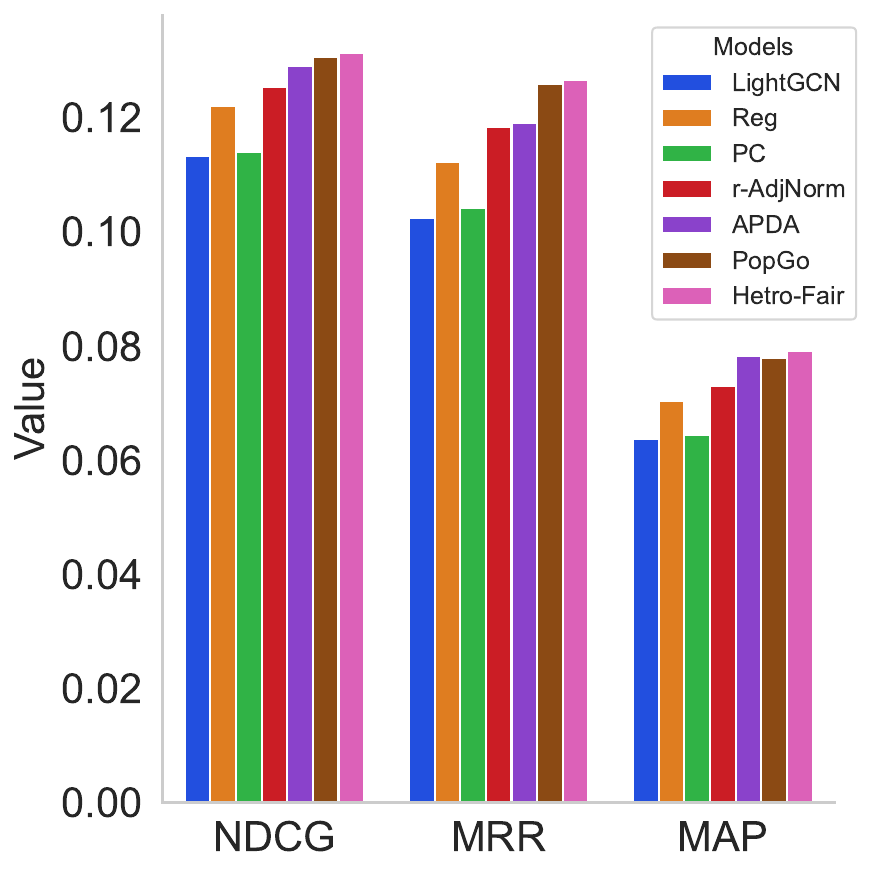}
		\caption{Long-tail items performance on the Beauty dataset.}
	\end{subfigure}
	\begin{subfigure}[t]{0.2\linewidth}
		\includegraphics[width=\linewidth]{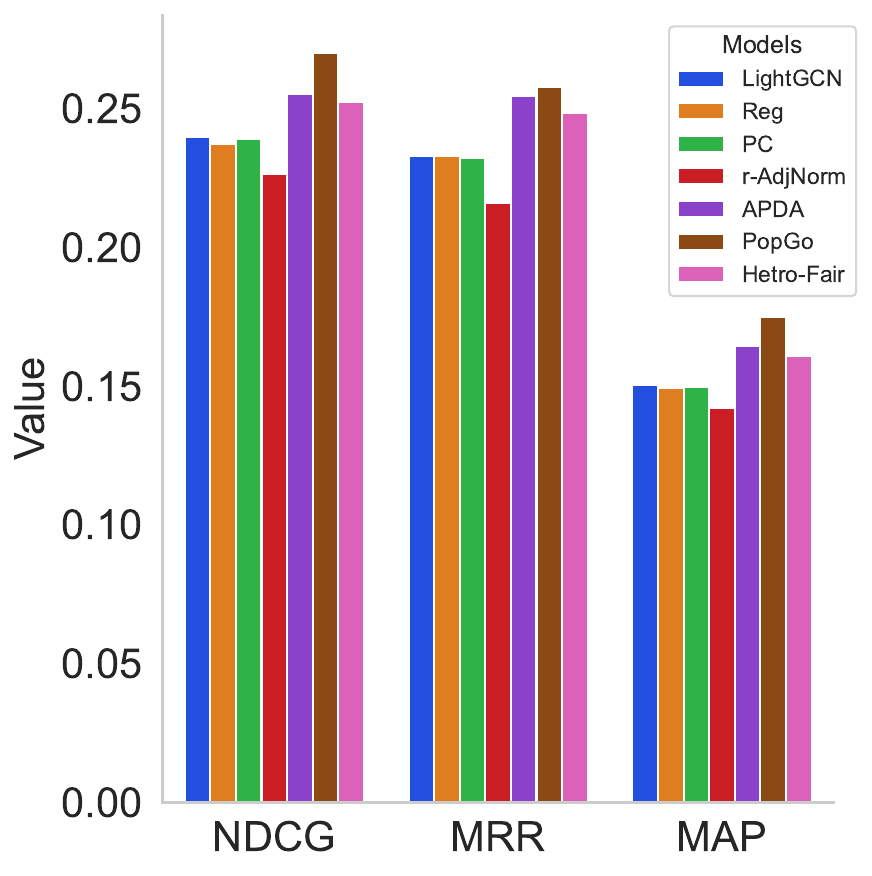}
		\caption{Short-head items performance on the Beauty dataset.}
	\end{subfigure}
	\begin{subfigure}[t]{0.2\linewidth}
		\includegraphics[width=\linewidth]{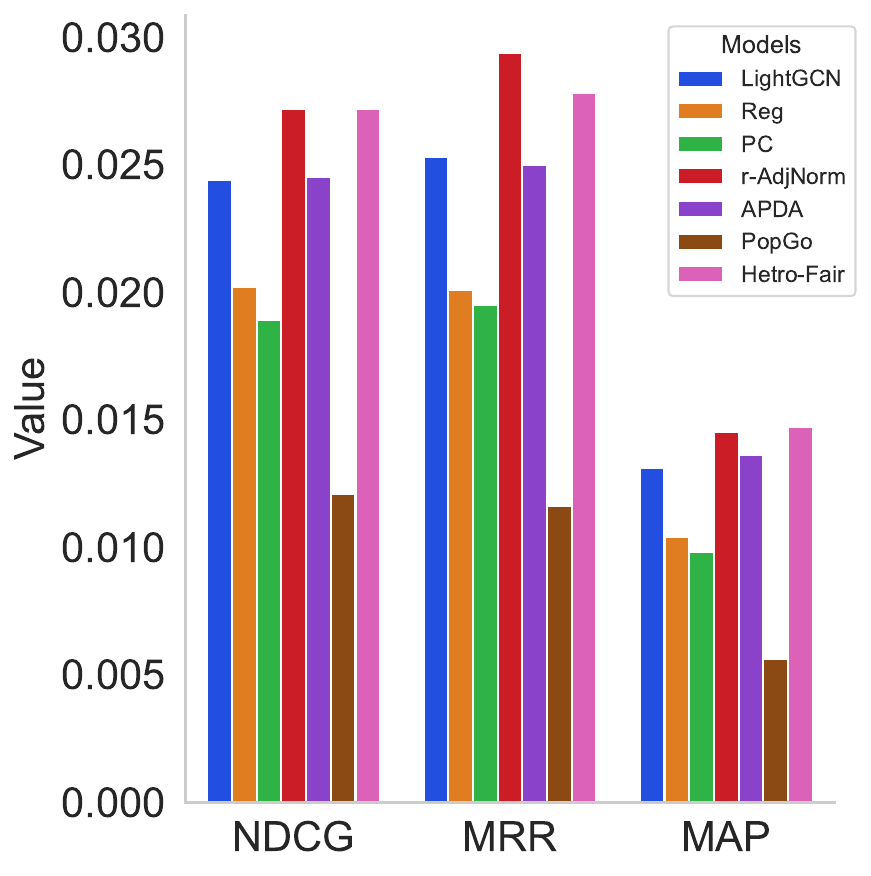}
		\caption{Long-tail items performance on the Movies dataset.}
	\end{subfigure}
	\begin{subfigure}[t]{0.2\linewidth}
		\includegraphics[width=\linewidth]{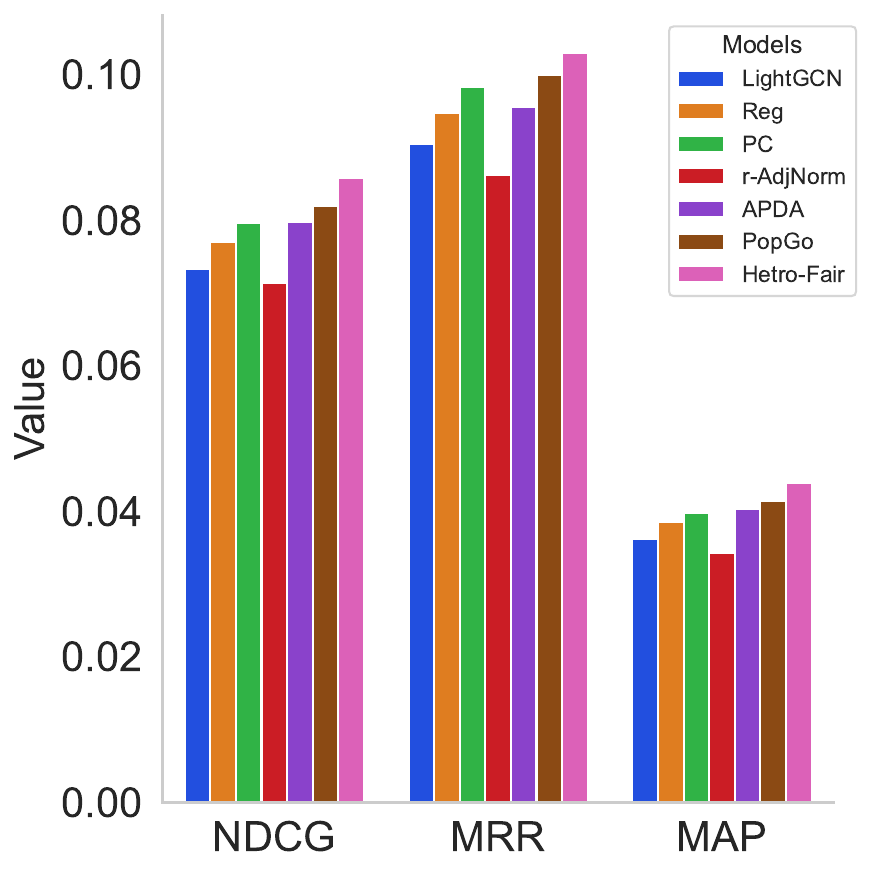}
		\caption{Short-head items performance on the Movies dataset.}
	\end{subfigure}
	\caption{Performance comparison of the methods,
		for short-head and long-tail items.\label{fig:long_short_performance}}
\end{figure*}

\subsection{Short-head and long-tail items performance}

To evaluate the effectiveness of HetroFair in recommending long-tail items,
we modify the prior test set and construct two following test sets,
one for long-tail items and the other for short-head items.
For the long-tail test set,
we exclude all short-head items from
the \(\tilde{O}_u^+\) sets of all users $u$.
For the short-head test set,
we remove all long-tail items existing in the \(\tilde{O}_u^+\) sets.
We sort the items based on their degree in descending order and consider the top \(20\%\) of them as short-head items, treating the rest as long-tail items.

Figure \ref{fig:long_short_performance} depicts the results of this investigation,
over all the six datasets.
These figures illustrate that for the category of long-tail items,
in most cases HetroFair achieves better accuracies.
This is compatible with the results presented in
Table~\ref{tbl:performance_comparison}, 
which indicate a reduction in the fairness metrics.
Moreover, this improvement coincides with an increase in accuracy in the category of short-head items.
Combining the results of Table~\ref{tbl:performance_comparison}
with the results presented in Figure \ref{fig:long_short_performance}
depicts that our performance improvement is not
solely tied to
long-tail items.
The simultaneous improvement of HetroFair in recommending short-head items
also plays a substantial role in reinforcing its overall performance.

\section{Conclusion}
\label{sec:conclusion}

In this paper, we studied item-side fairness in GNN-based recommendation systems. First, we pointed out that message normalization and aggregation processes in GNNs could lead to popularity bias. Then, we proposed the HetroFair model, which utilizes two techniques to overcome this bias and generate fair embeddings. The first technique is fairness-aware attention, which incorporates the dot product in the normalization process of GNNs to decrease the effect of nodes' degrees. The second technique is a trainable dot product similarity function, which assigns distinct weights to different features during the aggregation process. Our extensive experiments over six well-known datasets demonstrated the high performance of HetroFair in improving both user-side accuracy and item-side fairness.

\section*{Acknowledgment}

This work is supported by the Iran National Science Foundation (INSF)
under project No.4034377.

\bibliography{cite}
\bibliographystyle{acm}
	
\end{document}